\documentclass[11pt, letterpaper]{article}

\usepackage{booktabs}   
\usepackage{subcaption} 
\usepackage{libertine}
\usepackage{balance}
\usepackage{url}

\usepackage[utf8]{inputenc}
\usepackage{enumitem}
\usepackage{listings}
\usepackage{amsmath}
\usepackage{amsthm}
\usepackage{dsfont}

\usepackage{xspace}
\usepackage{amssymb}
\usepackage{multirow}
\usepackage{tabularx}
\usepackage{longtable}
\usepackage[ruled,linesnumbered,noresetcount]{algorithm2e}
\usepackage{multicol}
\usepackage{etoolbox}
\AtBeginEnvironment{algorithm}{\SetArgSty{textrm}} 

\SetCommentSty{mycommfont}
\newcommand{\myvar}[1]{$\mathit{#1}$}
\newcommand{\var}[1]{\mathit{#1}}
\SetKwComment{Comment}{$\triangleright$\ }{}
\SetEndCharOfAlgoLine{}
\SetKwProg{Procedure}{Procedure}{:}{end}
\SetKwRepeat{Repeat}{do}{while}

\newcommand{\mysketch}{Quancurrent\xspace}


\usepackage{lipsum}
\usepackage{graphicx}
\usepackage{caption}
\usepackage{float}
\usepackage{placeins} 
\usepackage{flafter}
\usepackage{wrapfig}
\graphicspath{ {./images/sequential_quantile/}{./images/Quancurrent/}{./images/graphs/}{./images/} }

\newtheorem{theorem}{Theorem}
\newtheorem{lemma}[theorem]{Lemma}
\newtheorem{claim}[theorem]{Claim}

\newtheorem{invariant}{Invariant}
\newtheorem{definition}{Definition}

\newcommand{\s}{$\sigma$\xspace}
\newcommand{\fs}{\(f(\sigma)\)\xspace}
\newcommand{\fstag}{\(f(\sigma')\)\xspace}
\newcommand{\ls}{\(l(\sigma)\)\xspace}

\newcommand{\Q}{\(\mathnormal{Q}\)\xspace}

\newcommand{\A}{$A=\mathcal{S}(f(\sigma))$\xspace}

\renewcommand{\v}{\emph{v}\xspace}
\newcommand{\probP}{\text{I\kern-0.15em P}}

\usepackage{xcolor}



\newcounter{mycounter}
\setcounter{mycounter}{0}

\usepackage{titlesec}
\setcounter{secnumdepth}{4}

\usepackage{siunitx}
\ExplSyntaxOn
\int_new:N \l__nebu_min_prefix_int
\tl_new:N \l__nebu_base_number_tl
\tl_new:N \l__nebu_mode_tl
\cs_new:Npn \nebu_prefix:n #1
  {
    \exp_args:Nf \__nebu_prefix:n
      { \exp_args:Nf \fp_to_scientific:n { \tl_lower_case:n {#1} } }
  }
\cs_new:Npn \__nebu_prefix:n #1
  { \__nebu_prefix:nwnw #1 \q_stop }
\cs_new:Npn \__nebu_prefix:nwnw #1 e #2 \q_stop
  { \__nebu_find_prefix:nnn {#2} {#1} {1} }
\cs_new:Npn \__nebu_find_prefix:nnn #1 #2 #3
  {
    \tl_if_exist:cT { l__nebu_ #1 \tl_use:N \l__nebu_mode_tl _prefix_tl }
      {
        \use_i_delimit_by_q_stop:nw
          { \__nebu_output:nnn {#2} {#1} {#3} }
      }
    \int_compare:nNnT {#1} < \l__nebu_min_prefix_int
      {
        \use_i_delimit_by_q_stop:nw
          {
            \exp_args:Nf \__nebu_find_prefix:nnn
              { \int_eval:n { \l__nebu_min_prefix_int } } {#2}
              { #3 / 1\prg_replicate:nn { \l__nebu_min_prefix_int - #1 }{ 0 } }
          }
      }
    \use_i:nn
      {
        \exp_args:Nf \__nebu_find_prefix:nnn
          { \int_eval:n {#1-1} } {#2} {#3*10}
      }
      \q_stop
  }
\exp_args_generate:n { fv }
\cs_new:Npn \__nebu_output:nnn #1 #2 #3
  {
    \exp_args:Nfv \nebu_output:nn
      { \fp_to_decimal:n {#1*#3} } { l__nebu_ #2 \tl_use:N \l__nebu_mode_tl _prefix_tl }
  }
\cs_new_protected:Npn \nebu_prefix_set:Nnn #1 #2 #3
  {
    \exp_args:Nxx \__nebu_prefix_set:nnN
      { \tl_trim_spaces:n {#2} } { \int_eval:n {#3} } #1
  }
\cs_new_protected:Npn \__nebu_prefix_set:nnN #1 #2 #3
  {
    \tl_clear_new:c { l__nebu_ #2 _prefix_tl }
    \tl_clear_new:c { l__nebu_ #2 _siunitx_prefix_tl }
    \tl_set:cn { l__nebu_ #2 _prefix_tl } {#1}
    \tl_set:cn { l__nebu_ #2 _siunitx_prefix_tl } {#3}
    \int_set:Nn \l__nebu_min_prefix_int { \int_min:nn {#2} { \l__nebu_min_prefix_int } }
  }
\NewExpandableDocumentCommand \prefix { m }
  { \nebu_prefix:n {#1} }
\cs_new:Npn \nebu_output:nn #1 #2 { #1\, \textrm{#2} }
\NewDocumentCommand \setprefix { m m m }
  { \nebu_prefix_set:Nnn #1 {#2} {#3} }
\DeclareSIPrefix \none { } { 0 }
\setprefix \none { } { 0 }
\cs_new_protected:Npn \__nebu_store:nn #1 #2
  {
    \tl_set:Nn \l__nebu_base_number_tl {#1}
    \cs_set:Npn \prefix {#2}
  }
\NewExpandableDocumentCommand \prefixSI { o m o m }
  {
    \group_begin:
      \cs_set_eq:NN \nebu_output:nn \__nebu_store:nn
      \tl_set:Nn \l__nebu_mode_tl { _siunitx }
      \nebu_prefix:n {#2}
      \SI [#1] { \l__nebu_base_number_tl } [#3] {#4}
    \group_end:
  }
\ExplSyntaxOff

\setprefix \yocto { y } { -24 }
\setprefix \zepto { z } { -21 }
\setprefix \atto  { a } { -18 }
\setprefix \femto { f } { -15 }
\setprefix \pico  { p } { -12 }
\setprefix \nano  { n } { -9 }
\setprefix \micro { \SIUnitSymbolMicro } { -6 }
\setprefix \milli { m } { -3 }
\setprefix \centi { c } { -2 }
\setprefix \deci  { d } { -1 }
\setprefix \deca  { da } { 1 }
\setprefix \hecto { h }  { 2 }
\setprefix \kilo  { k }  { 3 }
\setprefix \mega  { M }  { 6 }
\setprefix \giga  { G }  { 9 }
\setprefix \tera  { T }  { 12 }
\setprefix \peta  { P }  { 15 }
\setprefix \exa   { E }  { 18 }
\setprefix \zetta { Z }  { 21 }
\setprefix \yotta { Y }  { 24 }

\begin{document}

\title{Quancurrent: A Concurrent Quantiles Sketch}

\author{Shaked Elias-Zada \\ Technion \and Arik Rinberg \\ Technion \and Idit Keidar \\ Technion}

\date{}

\maketitle

    Sketches are a family of streaming algorithms widely used in the world of big data to perform fast, real-time analytics. A popular sketch type is Quantiles, which estimates the data distribution of a large input stream. We present Quancurrent, a highly scalable concurrent Quantiles sketch. Quancurrent’s throughput increases linearly with the number of available threads, and with $32$ threads, it reaches an update speedup of $12$x and a query speedup of $30$x over a sequential sketch. Quancurrent allows queries to occur concurrently with updates and achieves an order of magnitude better query freshness than existing scalable solutions.

\section {Introduction}
Data sketches, or \emph{sketches} for short, are indispensable tools for performing analytics on high-rate, high-volume data. Specifically, understanding the data distribution is a fundamental task in data management and analysis, used in applications such as exploratory data analysis~\cite{vartak2015seedb}, operations monitoring~\cite{abraham2013scuba}, and more. 

The Quantiles sketch family captures this task~\cite{masson2019ddsketch, mergeables_summaries, gan2018moment, cormode2021relative}. The sketch represents the quantiles distribution in a stream of elements, such that for any $0 \leq \phi \leq 1$, a query for quantile $\phi$ returns an estimate of the $\lfloor n\phi \rfloor ^{\text{th}}$ largest element in a stream of size $n$. For example, quantile $\phi=0.5$ is the median. Due to the importance of quantiles approximation, Quantiles sketches are a part of many analytics platforms, e.g., Druid~\cite{druid-quantiles}, Hillview~\cite{budiu2019hillview}, Presto~\cite{presto}, and Spark~\cite{spark}. 

Sketches are designed for \emph{stream} settings, in which each element is processed once. Like other sketches, Quantiles sketch is of sublinear-size and its estimates are \emph{probably approximately correct (PAC)}, providing an approximation within some error $\epsilon n$ with a failure probability bounded by some parameter $\delta$. 

The classic literature on sketches has focused on inducing a small error while using a small memory footprint, in the context of sequential processing: the sketch is built by a single thread, and queries are served only after sketch construction is complete. Only recently, we begin to see works leveraging parallel architectures to achieve a higher ingestion throughput while also enabling queries concurrently with updates~\cite{Rinberg_2020_fast_sketches, stylianopoulos2020delegation}. 
Of these, the only solution suitable for quantiles that we are aware of is the Fast Concurrent Data Sketches (FCDS) framework proposed by Rinberg et al.~\cite{Rinberg_2020_fast_sketches}. However, the scalability of FCDS-based quantiles sketches is limited unless query freshness is heavily compromised (as we show below). Our goal is to provide a scalable concurrent Quantiles sketch that retains a small error bound with reasonable query freshness.

In Section~\ref{sec:background}, we define the problem and overview a popular sequential solution proposed by Agarwal et. al~\cite{mergeables_summaries} which is used by Apache DataSketches~\cite{DataSketches}, on which our concurrent sketch is based.
In Section~\ref{sec:Quancurrent}, we present \mysketch, our highly scalable concurrent Quantiles sketch.
Like FCDS, \mysketch relies on local buffering of stream elements, which are then propagated in bulk to a shared sketch.
But \mysketch improves on FCDS by eliminating the latter's sequential propagation bottleneck, which mostly stems from the need to sort large buffers.

In \mysketch, sorting occurs at three levels – a small thread-local buffer, an intermediate NUMA-node-local buffer called $\mathit{Gather\&Sort}$, and the shared sketch.
Moreover, the shared sketch itself is organized in multiple levels, which may be propagated (and sorted) concurrently by multiple threads.

To allow queries to scale as well, \mysketch serves them from a cached snapshot of the shared sketch.
This architecture is illustrated in Figure~\ref{fig:quancurrentDS}.
The query freshness depends on the sizes of local and NUMA-local buffers as well as the frequency of caching queries.
We show that using this architecture, high throughput can be achieved with much smaller buffers (hence much better freshness) than in FCDS.


\begin{figure}[]
    \centering
    \includegraphics[width=0.8\linewidth,trim={0cm 0cm 0cm 0.3cm},clip]{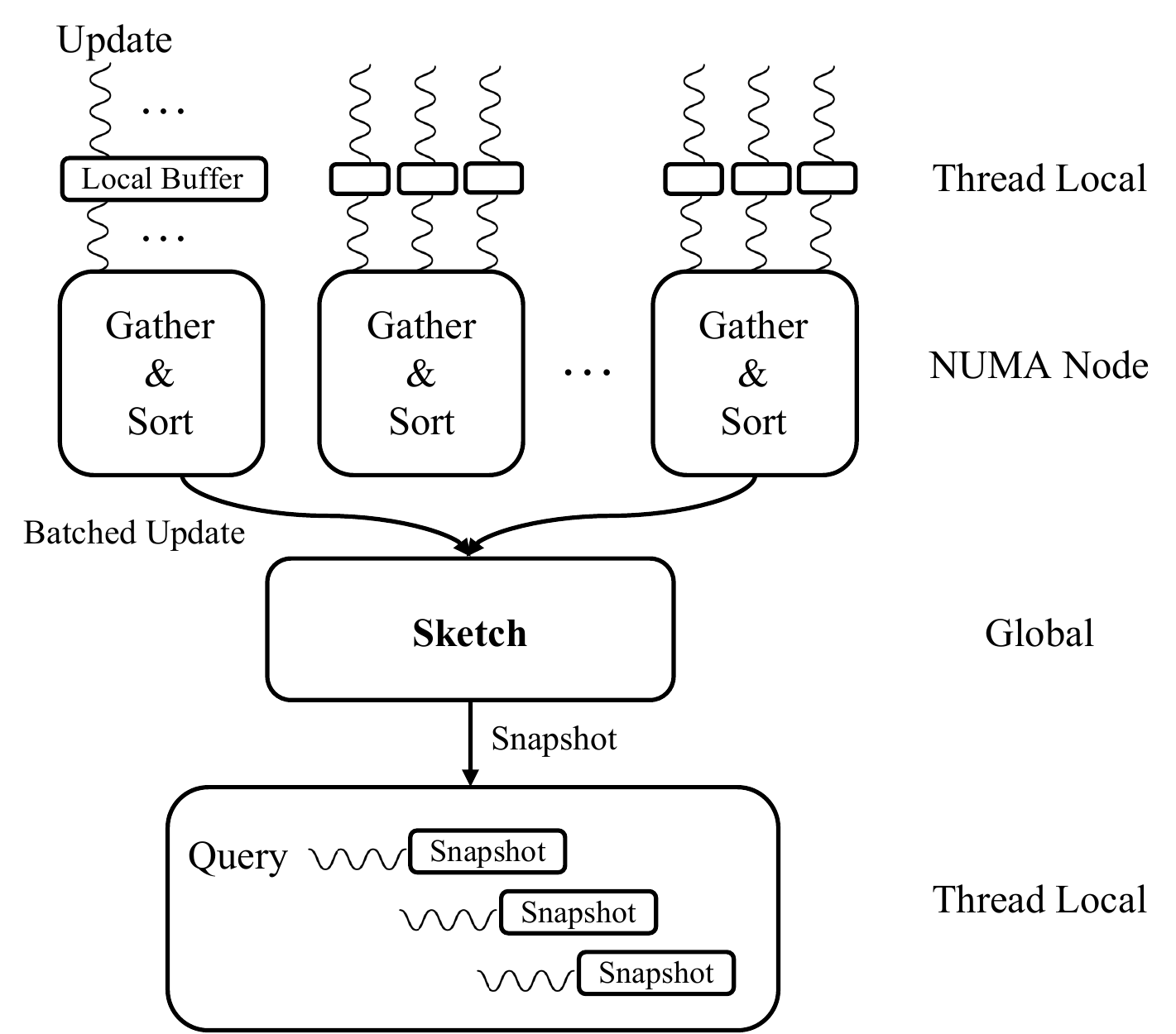}
    \caption{\mysketch's data structures.}
    \label{fig:quancurrentDS}
\end{figure}


To lower synchronization overhead, we allow buffered elements to be sporadically overwritten by others without being propagated, and others to be duplicated, i.e., propagated more than once. These occurrences, which we call \emph{holes}, alter the stream ingested by the data structure. 
Yet, in Section~\ref{sec:analysis} we showed that for a sufficiently large local buffer, the expected number of holes is less than $1$ and, because they are random, they do not change the sampled distribution.
Figure~\ref{fig:intro-query-accuracy} presents quantiles estimated by \mysketch on a stream of normally distributed random values compared to an exact, brute-force computation of the quantiles, and shows that the estimation remains accurate.

\begin{figure}[h]
    \centering
    \includegraphics[width=\linewidth,trim={0cm 0.3cm 0cm 1.5cm},clip]
    {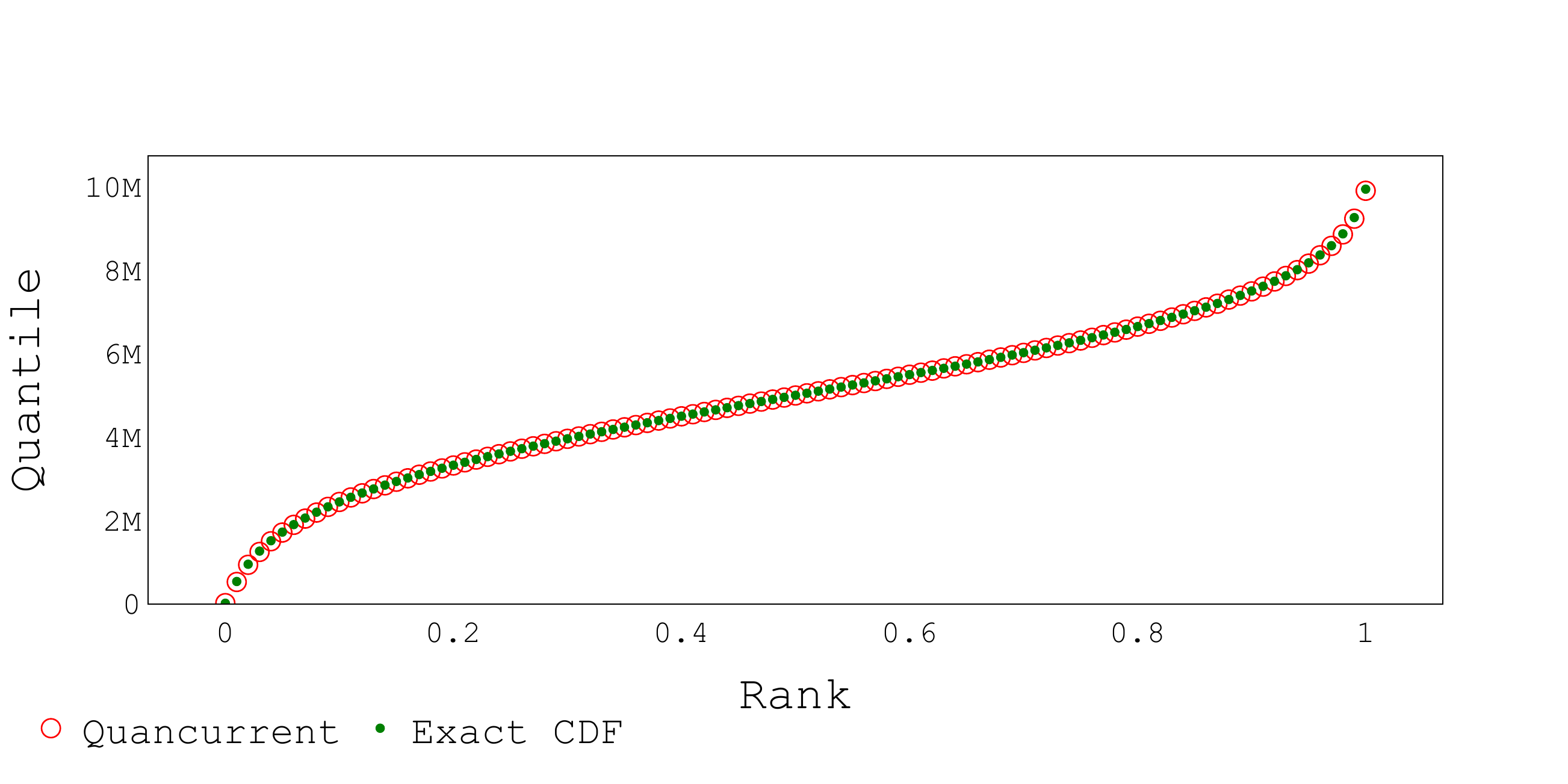}
    \caption{\mysketch quantiles vs. exact CDF, k = 1024, normal distribution, 32 update threads, 10M elements.}
    \label{fig:intro-query-accuracy}
\end{figure}

In Section~\ref{sec:eval} we empirically evaluate \mysketch. We show an update speedup of $12$x and a query speedup of $30$x over the sequential sketch, both with linear speedup. We compare \mysketch to FCDS, which is the state-of-the-art in concurrent sketches, and show that for FCDS to achieve similar performance it requires an order of magnitude larger buffers that \mysketch, reducing query freshness tenfold.

In the supplementary material we formally define the system model and present formal correctness proofs.





\section {Background}
\label{sec:background}
\subsection{Problem Definition}
Given a stream $A=x_1,x_2,\dots,x_n$ with $n$ elements,
the \emph{rank} of some $x$ (not necessarily in $A$) is the number of elements smaller than $x$ in $A$, denoted $R(A,x)$. For any $0 \leq \phi \leq 1$, the \emph{$\phi$ quantile} of $A$ is an element $x$ such that $R(A,x)=\lfloor \phi n \rfloor$.

A Quantiles sketch's API is as follows:
\begin{itemize}
\item \textbf{update(}$x$\textbf{)} process stream element $x$;
\item \textbf{query(}$\phi$\textbf{)} return an approximation of the $\phi$ quantile in the stream processed so far. 
\end{itemize}
A PAC Quantiles sketch with parameters $\epsilon, \delta$ returns element $x$ for query($\phi$) after n updates such that $R(A,x) \in \left[ (\phi-\epsilon)n,(\phi+\epsilon)n  \right]$, with probability at least $1-\delta$.

In an $r$-relaxed sketch for some $r\geq0$ every query returns an estimate of the $\phi$ quantile in a subset of the stream processed so far including all but at most $r$ stream elements~\cite{Henzinger_2013_Quantitative_Relaxation, Rinberg_2020_fast_sketches}.

\subsection{Sequential Implementation} \label{Section: sequential_imp}

The Quantiles sketch proposed by Agarwal et al.~\cite{mergeables_summaries} consists of a hierarchy of arrays, where each array summarizes a subset of the overall stream. The sketch is instantiated with a parameter $k$, which is a function of $(\epsilon,\delta)$. The first array, denoted level $0$, consists of at most $2k$ elements, and every subsequent array, in levels $1,2,\dots$, consists of either $0$ or $k$ elements at any given time.

Stream elements are processed in order of arrival, first entering level $0$, until it consists of $2k$ elements. Once this level is full, the sketch samples the array by sorting it and then selecting either the odd indices or the even ones with equal probability. The $k$ sampled elements are then propagated to the next level, and the rest are discarded. If the next level is full, i.e., consists of $k$ elements, then the sketch samples the union of both arrays by performing a merge sort, and once again retaining either the odd or even indices with equal probability. This propagation is repeated until an empty level is reached. Every level that is sampled during the propagation is emptied. Figure~\ref{fig: quantiles_sketch} depicts the processing of $4k$ elements.

\begin{figure}[h]
    \centering
    \includegraphics[width=\columnwidth]{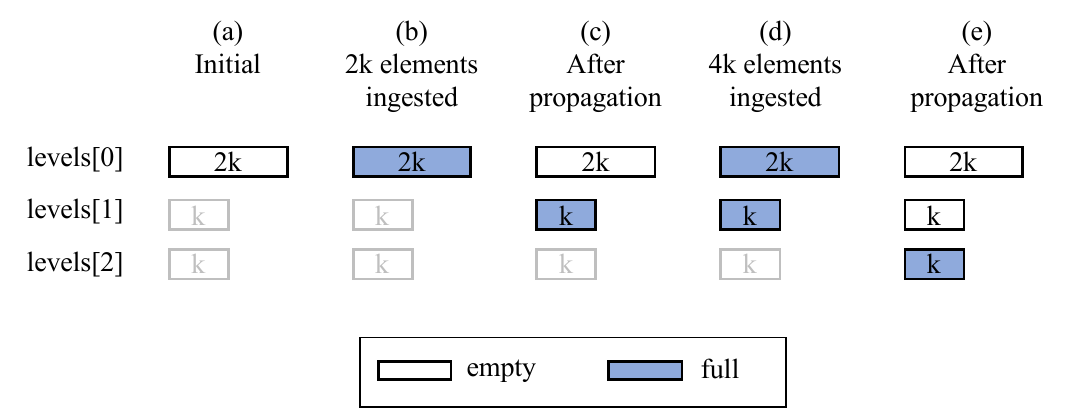}
    \caption{Quantiles sketch structure and propagation.}
    \label{fig: quantiles_sketch}
\end{figure}

Each element is associated with a \emph{weight}, which is the number of coin flips it has ``survived''. An element in an array on level $i$ has a weight of $2^i$, as it was sampled $i$ times. Thus, an element with a weight of $2^i$ represents $2^i$ elements in the processed stream.
For approximating the $\phi$ quantile, we construct a list of tuples, denoted \emph{samples}, containing all elements in the sketch and their associated weights. The list is then sorted by the elements' values. Denote by $W(x_i)$ the sum of weights up to element $x_i$ in the sorted list. The estimation of the $\phi$ quantile is an element $x_j$, such that $W(x_j) \leq \lfloor\phi n\rfloor$ and $W(x_{j+1}) > \lfloor\phi n\rfloor$.

\section{\mysketch}
\label{sec:Quancurrent}
We present \mysketch, an $r$-relaxed concurrent Quantiles sketch where $r$ depends on system parameters as discussed below. The algorithm uses $N$ update threads to ingest stream elements and allows an unbounded number of query threads. Queries are processed at any time during the sketch's construction. 
We consider a shared memory model that provides synchronization variables (atomics) and atomic operations to guarantee sequential consistency as in C++~\cite{Boehm_2008_cpp}. Everything that happened-before a write in one thread becomes visible to a thread that reads the written value. Also, there is a single total order in which all threads observe all modification in the same order. 
We use the following sequentially consistent atomic operations (which force a full fence): \emph{fetch-and-add (F\&A)}~\cite{x86-faa} and \emph{compare-and-swap (CAS)}~\cite{x86-cas}. 

In addition, we use a software-implemented higher-level primitive, \emph{double-compare-double-swap (DCAS)} which atomically updates two memory addresses as follows: DCAS($addr_1$: $old_1 \to new_1$, $addr_2$: $old_2 \to new_2$)
is given two memory addresses $addr_1$, $addr_2$, two corresponding expected values $old_1$, $old_2$, and two new values $new_1$, $new_2$ as arguments. It atomically sets $addr_1$ to $new_1$ and $addr_2$ to $new_2$ only if both addresses match their expected values, i.e., the value at $addr_1$ equals $old_1$ and the value at $addr_2$ equals $old_2$. DCAS also provides wait-free DCAS\_READ primitive, which can read fields that are concurrently modified by a DCAS. DCAS can be efficiently implemented using single-word CAS~\cite{Harris2002practical,guerraoui2020efficient}.

In Section~\ref{Section: data_organization}, we present the data structures used by Quancurrent. Section~\ref{Section: concurrent_algorithm} presents the update operation, and Section~\ref{ssec:query} presents the query. The formal correctness proof is deferred to the supplementary material. 

\subsection{Data Structures} \label{Section: data_organization}
\mysketch's data structures are described in Algorithm~\ref{alg: data_organization} and depicted in Figure~\ref{fig: data_structures}.
Similarly to the sequential Quantiles sketch, \mysketch is organized as a hierarchy of arrays called \emph{levels}. Each level can be \emph{empty}, \emph{full}, or in \emph{propagation}. The variable \emph{tritmap} maintains the states of all levels. Tritmap is an unsigned integer, interpreted as an array of trits (trinary digits). The trit $tritmap[i]$ describes level $i$'s state: if $tritmap[i]$ is $0$, level $i$ contains $0$ or $2k$ ignored elements and is considered to be empty. If $tritmap[i]$ is $1$, level $i$ contains $k$ elements and is deemed full, and if it is $2$, level $i$ contains $2k$ elements and is associated with the propagation state. Each thread has a local buffer of size $b$, $\mathit{localBuf[b]}$. Before ingested into the sketch's levels, stream elements are buffered in threads local buffers and then moved to a processing unit called $\mathit{Gather\&Sort}$. The $\mathit{Gather\&Sort}$ object has two $2k$-sized shared buffers, $\mathit{G\&SBuffer}[2]$, each with its own $\mathit{index}$ specifying the current location, as depicted in Figure~\ref{fig: gather_and_sort}. 

The query mechanism of \mysketch includes taking an atomic snapshot of the levels. Query threads cache the snapshot and the tritmap that represents it in local variables, 
$\mathit{snapshot}$ and $\mathit{myTrit}$, respectively. As the snapshot reflects only the sketch's levels and not G\&SBuffers or the thread's local buffers, Quancurrent is ($4kS+(N-S)b$)-relaxed Quantiles sketch where $S$ is the number of NUMA nodes. 

\begin{algorithm}[]
\caption{\mysketch data structures} \label{alg: data_organization}
\KwSty{Parameters and constants:}\;
\Indp
    \myvar{MAX\_LEVEL} \;
    \myvar{k} \Comment*[r]{sketch level size}
    \myvar{b}  \Comment*[r]{local buffer size}
    \myvar{S} \Comment*[r]{\#NUMA nodes}
\Indm

\KwSty{Shared objects:} \;
\Indp
    \myvar{tritmap} $\gets 0$ \;
    \myvar{levels}[\myvar{MAX\_LEVEL}]\;
\Indm

\KwSty{NUMA-local objects:} \; \Comment*[r]{shared among threads on the same node}
\Indp
    \myvar{G\&SBuffer}[2][2\myvar{k}] \;
    \myvar{index}[2] $\gets \{0,0\}$ \;
\Indm

\KwSty{Thread local objects:} \;
\Indp
    \myvar{localBuf}[\myvar{b}] \;
    \myvar{myTrit} \Comment*[r]{used by query}
    \myvar{snapshot} \Comment*[r]{used by query}
\Indm
\end{algorithm}

\subsection{Update} \label{Section: concurrent_algorithm} 
The ingestion of stream elements occurs in three stages: 
(1) \emph{gather and sort}, 
(2) \emph{batch update}, and 
(3) \emph{propagate level}. 
In stage (1), stream elements are buffered and sorted into batches of $2k$ through a $\mathit{Gather\&Sort}$ object. Each NUMA node has its designated $\mathit{Gather\&Sort}$ object, which is accessed by NUMA-local threads. 
Stage (2) executes a batch update of $2k$ elements from the $\mathit{Gather\&Sort}$ object to $levels[0]$. 
Finally, in stage (3), $levels[0]$ is propagated up the levels of the hierarchy.

In the first stage, threads first process stream elements into a thread-local buffer of size $b$. Once the buffer is full, it is sorted and the thread reserves $b$ slots on a shared buffer in its node's Gather\&Sort unit. It then begins to move the local buffer's content to the shared buffer. The shared Gather\&Sort buffer contains $2k$ elements, and its propagation (during Stage 2) is not synchronized with the insertion of elements. Thus, some reserved slots might still contain old values, (which have already been propagated), instead of new ones. As the batch is a sample of the original stream, we can accept the possible loss of information in order to improve performances. Below, we show that the sampling bias this introduces is negligible.

The pseudo-code for the first stage is presented in Algorithm~\ref{alg: gather}. To insert its elements to the shared buffer, a thread tries to reserve $b$ places in one of the shared buffers using F\&A (Line~\ref{Line: fetch_idx}). If the index does not overflow, the thread copies its local buffer to the reserved slots (Line~\ref{Line: copy_local_buffer}). We refer to the thread that fills the last $b$ locations in a G\&SBuffer as the \emph{owner} of the current batch. The batch owner creates a local sorted copy of the shared buffer and begins its propagation (Lines~\ref{Line:create_copy}-\ref{Line:call_batchUpdate}).

Note that the local buffer is not atomically moved into the shared buffer (Line \ref{Line: copy_local_buffer} is a loop). Thus, the owner might begin a propagation before another thread has finished moving its elements to the shared buffer. In this case, the old elements already contained within the G\&SBuffer are taken instead. Furthermore, upon moving its elements later, the writer thread might overwrite more recent elements. In other words, during this stage, stream elements may be duplicated and new elements may be dropped. We call both of these occurrences \emph{holes}, and analyze their implications in Section~\ref{ssec:holes-analysis}.

\begin{algorithm}[]
\caption{Stage 1: gather and sort} \label{alg: gather}
\SetKwFunction{update}{update}
\Procedure{\update{x}}{
add $\mathit{x}$ to \myvar{localBuf} \Comment*[r]{thread-local} \label{Line: update_linearization}
    \lIf{$\neg \mathit{localBuf}$.full()}{\KwRet{}}
    sort \myvar{localBuf} \;
    \myvar{i} $\gets 0$ \;
    \Repeat(\Comment*[f]{insert to Gather\&Sort unit}){$\mathit{true}$}{
        \myvar{idx} $\gets$ \myvar{index}[\myvar{i}].F\&A(\myvar{b}) \; \label{Line: fetch_idx}
        \uIf(\Comment*[f]{space available}){$\mathit{idx}< 2k$}{
            move \myvar{localBuf} to \myvar{G\&SBuffer}[\myvar{i}][$\mathit{idx}, \dots, \mathit{idx}+b$] \; \label{Line: copy_local_buffer}
            \uIf(\Comment*[f]{owner, filled buffer}){$\var{idx}+b=2k$}{
                $\mathit{myCopy} \gets \text{sorted copy of } \mathit{G\&SBuffer}[\mathit{i}]$ \; \label{Line:create_copy}
                batchUpdate(\myvar{i},\myvar{myCopy}) \; \label{Line:call_batchUpdate}
            }
            \KwRet{}
        }
        \myvar{i} $\gets$ $\neg$\myvar{i}\;
    }   
}
\end{algorithm}

\begin{figure}[]
    \begin{subfigure}[b]{\linewidth}
    \centering
    \includegraphics[width=0.4\linewidth]{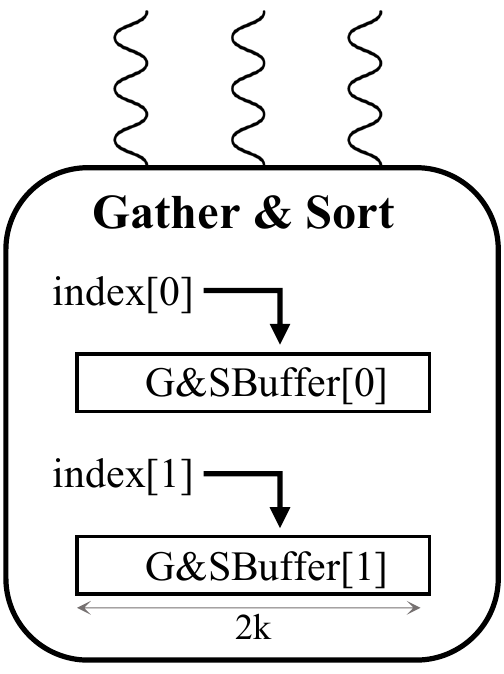}
    \caption{$\mathit{Gather\&Sort}$ object.}
    \label{fig: gather_and_sort}
    \end{subfigure}
    \par\medskip
    \begin{subfigure}[b]{\linewidth}
    \centering
    \includegraphics[width=0.5\linewidth]{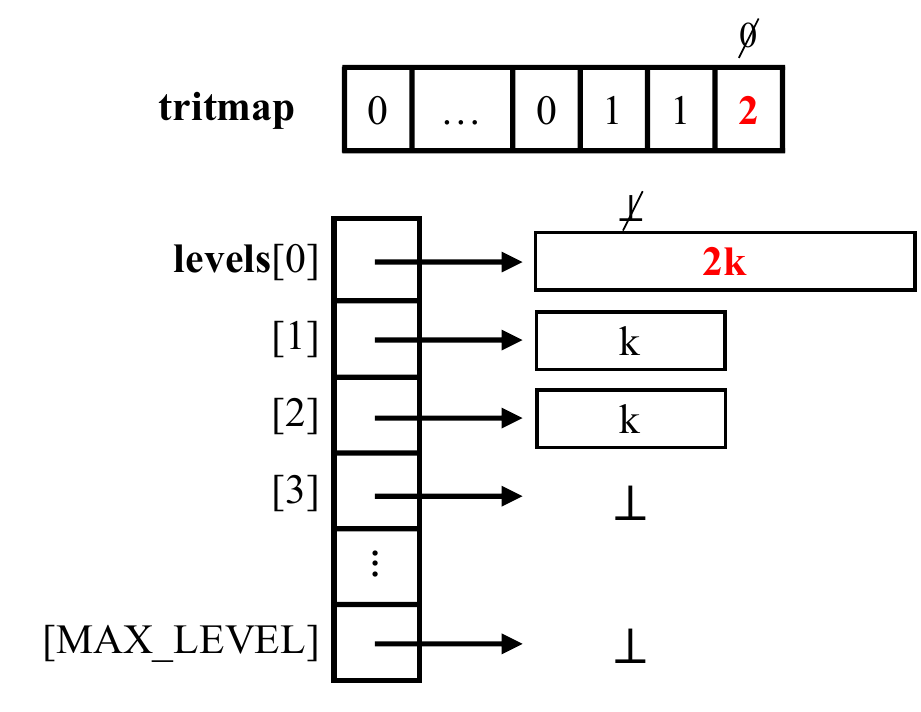}
    \caption{Batch update into $\mathit{levels}[0]$.}
    \label{fig: batch_update}
    \end{subfigure}%
    
    \caption{Quancurrent's data structures.}
    \label{fig: data_structures}
    
\end{figure}

In the second stage, the owner inserts its local sorted copy of the shared buffer into level $0$ using a DCAS. The batch of $2k$ elements is only inserted when level $0$ is empty, reflected by the first digit of the tritmap being $0$. We use DCAS to atomically update both \emph{levels}[0] to point to the new sorted batch and \emph{tritmap} to indicate an ongoing batch update (reflected by setting $\mathit{tritmap[0]}$ to 2). The DCAS might fail if other owner threads are trying to insert their batches or propagate them. The owner keeps trying to insert its batch into the sketch's first level until a DCAS succeeds, and then resets the index of the G\&SBuffer to allow other threads to ingest new stream elements. 
The pseudo-code for the second stage is presented in Algorithm~\ref{alg: batch_update}, and an example is depicted in Figure~\ref{fig: batch_update}.

\begin{algorithm}[]
\caption{Stage 2: batch update} \label{alg: batch_update}
\SetKwFunction{batchUpdate}{batchUpdate}
\Procedure{\batchUpdate{\myvar{i},\myvar{base\_copy}}}{
    \lWhile(\label{Line:insert_batch}){$\neg$DCAS(\myvar{levels}[0]: $\bot \to$ \myvar{base\_copy}, \myvar{tritmap}[0]: 0 $\to$ 2 )}{\{ \}}
    \myvar{index}[\myvar{i}] $\gets 0$ \;
    propagate(0)\;
}
\end{algorithm}

In the beginning of the third stage, level 0 points to a new sorted copy of a \emph{G\&SBuffer} array and \emph{tritmap}[0]=2. During this stage, the owner thread propagates the newly inserted elements up the levels hierarchy iteratively, level by level from level 0 until an empty level is reached. The pseudo-code for the propagation stage is presented in Algorithm~\ref{alg: propagate}. On each call to \emph{propagate}, level $l$ is propagated to level $l+1$, assuming that level $l$ contains $2k$ sorted elements and $\mathit{tritmap}[l]=2$. If $\mathit{tritmap}[l+1]=2$, the owner thread is blocked by another propagation from $l+1$ to $l+2$ and it waits until $\mathit{tritmap}[l+1]$ is either a $0$ or $1$. The owner thread samples $k$ elements from level $l$ and retains the odd or even elements with equal probability (Line~\ref{Line: choose_rand}). 
If $\mathit{tritmap}[l+1]$ is $1$, then level $l+1$ contains $k$ elements. The sampled elements are merged with level $l$+1 elements into a new $2k$-sized sorted array (Line~\ref{Line: next_full}). We then (in Line~\ref{Line:next_full_DCAS}) continuously try, using DCAS, to update \emph{levels}[$l$+1] to point to the merged array and atomically update \emph{tritmap} such that \emph{tritmap}[$l$] $\gets 0$, reflecting level $l$ is available, and \emph{tritmap}[$l$+1] $\gets 2$, reflecting that level $l$+1 contains $2k$ elements. After a successful DCAS, we clear level $l$ (set it to $\bot$) and proceed to propagate the next level (Line~\ref{Line:propagate_next}). 
If $\mathit{tritmap}[l+1]$ is $0$, then level $l+1$ is empty. We use DCAS (Line~\ref{Line:next_empty_DCAS}) to update $\mathit{levels}[l+1]$ to point to the sampled elements and atomically update \emph{tritmap} so that \emph{tritmap}[$l$] becomes $0$, and \emph{tritmap}[$l$+1] becomes $1$ (containing $k$ elements). After a successful DCAS, we clear level $l$ (set it to $\bot$) and end the current propagation.

Propagations of different batches may occur concurrently, i.e., level propagation of levels $l$ and $l'$ can be performed in parallel. Figure~\ref{fig: propagate} depicts an example of concurrent propagation of two batches.

\begin{algorithm}[h]
\caption{Stage 3: Propagation of level $l$} \label{alg: propagate}
\SetKwFunction{propagate}{propagate}
\Procedure{\propagate{\myvar{l}}}{
    \lIf(){\myvar{l} $\geq$ \myvar{MAX\_LEVEL}}{\KwRet{}}
    \Comment*[l]{choose odd or even indexed elements randomly}
    \myvar{newLevel} $\gets$ sampleOddOrEven(\myvar{levels}[\myvar{l}]) \; \label{Line: choose_rand}
    \uIf(\Comment*[f]{next level is full}){\myvar{tritmap}[\myvar{l}$+1$] $ =1$}{
         \myvar{newLevel} $\gets$ merge(\myvar{newLevel}, \myvar{levels}[\myvar{l}$+1$]) \label{Line: next_full}\;
         \lWhile(\label{Line:next_full_DCAS}){$\neg$DCAS(\myvar{levels}[\myvar{l}$+1$]: \myvar{levels}[\myvar{l}$+1$] $\to$ \myvar{newLevel},  \myvar{tritmap}[\myvar{l}, \myvar{l}$+1$]: $ [2,1] \to [0,2]$)}{\{ \}} 
        \myvar{levels}[\myvar{l}] $\gets \bot$ \Comment*[r]{clear level} 
        \KwRet{propagate(\myvar{l}+1)} \label{Line:propagate_next}
    }
    \Comment*[l]{\myvar{tritmap}[\myvar{l}+1] is 0 or 2}
    \lWhile(\label{Line:next_empty_DCAS}){$\neg$DCAS(\myvar{levels}[\myvar{l}$+1$]: $ \bot \to$ \myvar{newLevel}, \myvar{tritmap}[\myvar{l}, \myvar{l}$+1$]: $ [2,0] \to [0,1]$)}{\{ \}} 
        \myvar{levels}[\myvar{l}] $\gets \bot$ \Comment*[r]{clear level} 
}
\end{algorithm}

\begin{figure*}[h]
\includegraphics[width=0.8\textwidth,trim={0 0cm 0 0},clip]{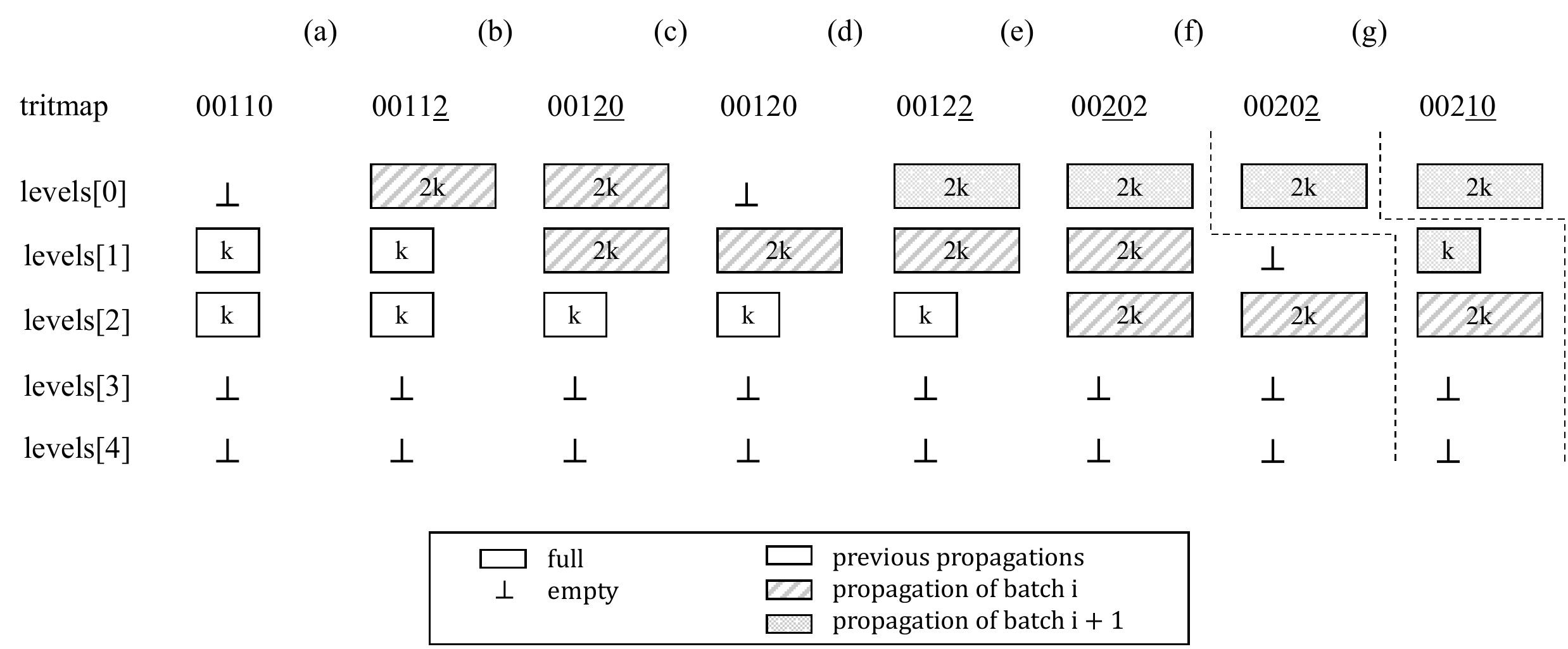}
\centering
\captionsetup{justification=centering}
\caption[\mysketch~\xspace propagation.]{\mysketch~\xspace propagation. \par \small \textbf{(a)} The owner of batch $i$, owner($i$), inserts batch $i$ to level $0$ and atomically updates $tritmap[0]$ to $2$. 
\textbf{(b)} owner($i$) merges level $0$ with level $1$ and changes $tritmap[1,0]$ from $[1,2]$ to $[2,0]$. 
\textbf{(c)} owner($i$) clears level $0$. 
\textbf{(d)} owner($i+1$) inserts its batch to level $0$ and atomically updates $tritmap[0]$ to $2$. 
\textbf{(e)} owner($i$) merges level $1$ with level $2$, and sets $tritmap[2,1]$ to $[2,0]$. Batch $i+1$ is still blocked because level $1$ has not been cleared yet. 
\textbf{(f)} owner($i$) clears level $1$. 
\textbf{(g)} Now owner($i+1$) successfully merges level $0$ with the empty level $1$, and sets $tritmap[1,0]$ to $[1,0]$. 
}
\label{fig: propagate}
\end{figure*}

\subsection{Query}
\label{ssec:query}

Queries are performed by an unbounded number of query threads. A query returns an approximation based on a subset of the stream processed so far including all elements whose propagation into the levels array begun before the query was invoked. The query is served from an atomic snapshot of the levels array. The pseudo-code is presented in Algorithm~\ref{alg: sl_query}. Instead of collecting a new snapshot for each query, we cache the snapshot so that queries may be serviced from this cache, as long as the snapshot isn't too stale. The snapshot and the tritmap value that represents it are cached in local variables, $\mathit{snapshot}$ and $\mathit{myTrit}$, respectively. Query freshness is controlled by the parameter $\rho$, which bounds the ratio between the current stream size and the cached stream size. As long as this threshold is not exceeded, the cached snapshot may be returned (Lines~\ref{Line:check_rho}-\ref{Line:query_from_cache}). Otherwise, a new snapshot is taken and cached. 

The snapshot is obtained by first reading the tritmap, then reading the levels from $0$ to $MAX\_LEVEL$, and then reading the tritmap again. If both reads of the tritmap represent the same stream size then they represent the same stream. We can use the levels read to reconstruct some state that represents this stream. The process is repeated until two such tritmap values are read. For example, focusing on the last two phases of the propagation in Figure~\ref{fig: propagate}, lets assume a query thread $T_q$ reads $tm1=00202$, then reads the levels from $\mathit{levels[0]}$ to $\mathit{levels[4]}$ as depicted in Figure~\ref{fig: propagate} (between the dashed lines), and then read $tm2=00210$. The two tritmap reads represent the same stream of size $10k$, thus a snapshot representing the same stream can be constructed from the levels read. The pseudo-code for calculating the stream size is presented in Algorithm~\ref{alg:tritmap}. Each level is read atomically as the levels arrays are immutable and replaced by pointer swings. The snapshot is a subset of the levels summarizing the stream. To construct the snapshot, the collected levels are iterated over, in reversed order, from $MAX\_LEVEL$ to $0$, and level $i$ is added to the snapshot only if the total collected stream size (including level $i$) is less than or equal to the stream size represented by the tritmap (Line~\ref{Line:add_to_snapshot}). Back to our last example, the size of each level collected by $T_q$ is ${2k,k,2k,0,0}$ (in descending order). As explained, to construct the snapshot, we go over the collected levels from $\mathit{snapLevels[4]}$ to $\mathit{snapLevels[0]}$. By reading $snapLevels[1]$, the total stream size represented by snapshot is $0+0+4\cdot2k+2\cdot k = 10k$. As the stream size represented by $tm1$ and $tm2$ is $10k$, the construction of the snapshot is done and all elements of the processed stream are represented exactly once. The tritmap $\mathit{myTrit}$ maintains the total size of the collected stream and each trit describes the state of a collected level. If level i was collected to the snapshot, the value of $\mathit{myTrit[i]}$ is the size of level i divided by $k$ (Line~\ref{Line:update_myTrit}).

As levels propagate from lowest to highest, reading the levels in the same direction ensures that no element would be missed but may cause elements to be represented more than once. Building the snapshot bottom-up ensures that each element will be accounted once. 
In other words, reading the levels from lowest to highest and building the snapshot highest to lowest ensures that an atomic snapshot is collected, as proven in the Supplementary material.


\begin{algorithm}[]
\caption{Query} \label{alg: sl_query}
\SetKwFunction{Query}{Query}
\Procedure{\Query{$\phi$}}{
    \myvar{tm1} $\gets$ tritmap \; \label{Line:read_current_stream}
    \uIf{$\frac{tm1.streamSize()}{myTrit.streamSize()}$ $\leq$  $\rho$}{ \label{Line:check_rho}
        \KwRet{snapshot.query($\phi$)} \; \label{Line:query_from_cache}
    }
    \Repeat(){\myvar{tm1}.streamSize() $\ne$ \myvar{tm2}.streamSize() }  { \label{Line:start_collect_levels}  
         \myvar{tm1} $\gets$ tritmap \;
         \myvar{snapLevels} $\gets$ read \myvar{levels} $0$ to \myvar{MAX\_LEVEL} \; \label{Line: read_snap}
         \myvar{tm2} $\gets$ \myvar{tritmap} \; \label{Line: second-collect}
    } \label{Line:query_linearization}
    \myvar{myTrit} $\gets 0$ \; \label{line: query_estimate_start}
    \myvar{snapshot} $\gets$ empty \myvar{snapshot} \;
    \For{\myvar{i} $\gets$ \myvar{MAX\_LEVEL} \KwTo  $0$}{ \label{Line:start_collect_snapshot}
        \myvar{weight} $\gets 2^i $ \;
        \uIf(\label{Line:add_to_snapshot}){\myvar{snapLevels}[\myvar{i}].size()$\cdot$\myvar{weight}+ \myvar{myTrit}.streamSize()$\leq$\myvar{tm1}.streamSize()}{
            add \myvar{snapLevels}$[i]$ to \myvar{snapshot} \; \label{Line:add_level_to_snapshot}
            \myvar{myTrit}[\myvar{i}] $\gets$ \myvar{snapLevels}[\myvar{i}].size()$ / k$ \; \label{Line:update_myTrit} \label{line: query_estimate_end}
            \lIf{\myvar{myTrit}.streamSize()=\myvar{tm1}.streamSize()} { \label{Line:snapshot_done}
                break
            }
        }
    }
    \KwRet{\myvar{snapshot}.query($\phi$)} \label{Line:query_result}
}
\end{algorithm}

\begin{algorithm}[]
\caption{Tritmap} \label{alg:tritmap}
\SetKwFunction{streamSize}{streamSize}
    \Procedure{\streamSize{}} {
        \myvar{curr\_stream} $\gets$ 0 \;
        \For{\myvar{i} $\gets$ 0 \KwTo \myvar{MAX\_LEVEL}}{
            \myvar{weight} $\gets 2^i $ \;
            \uIf{\myvar{tritmap}[\myvar{i}] $=1$}{
                \myvar{curr\_stream} $\gets$ \myvar{curr\_stream} $+$ \myvar{weight}$\cdot k$
            }
            \uElseIf{\myvar{tritmap}[\myvar{i}] $=2$}{
                     \myvar{curr\_stream} $\gets$ \myvar{curr\_stream} $+$ \myvar{weight}$\cdot 2k$
            }
        }
        \KwRet{\myvar{curr\_stream}}
    }
\end{algorithm}


\section{Analysis}
\label{sec:analysis}
In Section~\ref{ssec:holes-analysis} we analyze the expected number of holes, and in Section~\ref{ssec:error-analysis} we analyze \mysketch's error.

\subsection{Holes Analysis}
\label{ssec:holes-analysis}
Because the update operation moves elements from the thread's local buffer to a shared buffer non-atomically, holes may occur when the owner thread reads older elements that were written to the shared buffer in a previous window. The missed (delayed) writes may later overwrote newer writes. Together, for each hole, an old value is duplicated and a new value is dropped.

We analyze the expected number of holes under the assumption of a \emph{uniform stochastic scheduler}~\cite{alistarh2016lock}, which schedules each thread with a uniform probability in every step. That is, at each point in the execution, the probability for each thread to take the next step is $\frac{1}{N}$.

Denote by $H$ the total number of holes in some batch of $2k$ elements. G\&SBuffer's array is divided into $\frac{2k}{b}$ regions, each consisting of $b$ slots populated by the same thread. Denote by $H_1,\dots, H_{\frac{2k}{b}}$ the number of holes in regions $1, \dots, \frac{2k}{b}$, respectively.

The slots in region $j$ are written to by the thread that successfully increments the shared index from $(j-1)b$ to $jb$. We refer to this thread as $T_j$. Note that multiple regions may have the same writing thread. The shared buffer's owner, $T_O$, is $T_{\frac{2k}{b}}$. To initiate a batch update, $T_O$ creates a local copy of one G\&SBuffer by iteratively reading the array. A hole is read in some region $j$ if $T_O$ reads some index $i+1$ in this region before the writer thread $T_j$ writes to the corresponding index in the same region.

\paragraph*{Analysis of $\boldsymbol{H_j}$.} When $T_O$ increments the index from $2k-b$ to $2k$, $T_j$ may have completed any number of writes between $0$ and $b$ to region $j$. We first consider the case that $T_j$ hasn't completed any writes. In this case, for a hole to be read in slot $i+1$ of region $j$, $T_O$'s read of slot $i+1$ must overtake $T_j$'s write of the same slot. To this end, $T_O$ must write $b$ values (from its own local buffer), read $(j-1)b$ values from the first $j-1$ regions and then read values from slots $1,\dots,i+1$ in this region before $T_j$ takes $i+1$ steps. The probability that $T_O$ reads a hole for the first time in this region in slot $i+1$ is:
\begin{align*}
\pi_{i,j} \triangleq P[\text{hole in slot }i+1  &\mid \text{no hole in slots } 1\dots i] \\
                                            & \cdot P[\text{no hole in slots } 1\dots i].
\end{align*}
For a hole to be read in slot $i+1$ of region $j$, $T_O$ must take $b+(j-1)b+i+1$ steps while $T_j$ takes at most $i$ steps, with $T_O$'s read of slot $i+1$ being last.
But if $T_j$ takes fewer than $i$ steps, a hole is necessarily read earlier than slot $i+1$. Therefore, we can bound $\pi_{i,j}$ by considering the probability that $T_j$ takes exactly $i$ steps while $T_O$ takes $b+(j-1)b+i$ steps, and then $T_O$ takes a step. Ignoring steps of other threads, each of $T_j$ and $T_O$ has a probability of $\frac{1}{2}$ to take a step before the other. Therefore,
\[\pi_{i,j} \leq \left(\frac{1}{2}\right)^{jb + 2i +1} {{jb+2i} \choose i}.\]

Note that this includes schedules in which $T_O$ reads holes in previous slots in the same region, therefore it is an upper bound.
Given that $T_j$ has not yet written in region $j$, the probability, $p_j$, that $T_O$ reads at least $1$ hole in region $j$ is bounded as follows:
\begin{align*}
    p_j &\leq \sum_{i=0}^{b-1} \pi_{i,j}
\end{align*}


If $T_j$ has completed writes to region $j$, the probability that $T_O$ reads holes is even lower.
Therefore, the probability that $H_j \geq 1$ is bounded from above by $p_j$.
Using this, we bound the expected total number of holes in region $j$:
\[E\left[H_j\right]= P(H_j=0)\cdot 0 + P(H_j=1)\cdot 1 + \dots + P(H_j=b)\cdot b.\]
$T_O$ can read at most $b$ holes, therefore,
\begin{flalign*}
E\left[H_j\right] &< b\cdot \left(P(H_j=1) + \dots + P(H_j=b) \right) \\
                    &= b\cdot P(H_j\geq1) < b\cdot p_j.
\end{flalign*}



Using the linearity of expectation, we bound the expected number of holes in a batch:
\[E\left[H\right]= E\left[H_1\right] + E\left[H_2\right] + \dots + E\left[H_{\frac{2k}{b}}\right]. \]

In the supplementary material, we prove that
\begin{align*}
   &\forall j\geq1, b\in \mathds{N}, \; E[H_{j+1}] \leq 0.5 \cdot E[H_j] \\ 
    &\forall b \in \mathds{N}, \;  E[H_1] \leq 1.4 
\end{align*}
Together, this implies that  $E[H] \leq 2.8$ for all $b \in \mathds{N}$.

\subsection{Error Analysis}
\label{ssec:error-analysis}
The source of \mysketch's estimation error is twofold: (1) the error induced by sub-sampling the stream, and (2) the additional error induced by concurrency. For the former, we leverage the existing literature on analysis of sequential sketches. We analyze the latter. 
As the expected number of holes is fairly small and the holes are random, we disregard their effect on the error analysis. 

First, our buffering mechanism induces a relaxation. Let $S$ be the number of NUMA nodes. Recall that each NUMA node has a Gather\&Sort object that contains two buffers of size $2k$. In addition, each of the $N$ update threads has a local buffer. When the G\&SBuffer is full, the local buffer of the owner is empty so at most $N-S$ threads lcally buffered elements. Therefore, the buffering relaxation $r$ is $4kS+(N-S)b$.

Rinberg et al.~\cite{Rinberg_2020_fast_sketches} show that for a query of a $\phi$-quantile, an $r$-relaxation of a Quantiles sketch with parameters $\epsilon_c$ and $\delta_c$, returns an element whose rank is in the range $[(\phi-\epsilon_r)n,(\phi+\epsilon_r)n]$ with probability at least $1-\delta_c$, for $\epsilon_r=\epsilon_c+\frac{r}{h}(1-\epsilon_c)$.

On top of this relaxation, our cache mechanism induces further staleness. Here, the staleness depends on $\rho$. Let $n_{old}$ be the stream size of the cached snapshot, and let $n_{new}$ be the current stream size. If $n_{new} / n_{old} \leq \rho$ then the query is answered from the cached snapshot.
Denote $\rho \triangleq 1+\epsilon'$ for some $\epsilon' \geq 0 $. The element returned by the cached snapshot is in the range:
\[\left[\left(\phi-\epsilon_r\right)n_{old},\left(\phi+\epsilon_r\right)n_{old}\right]\]
As $n_{old} \leq n_{new}$, then, 
\begin{flalign*}
\left(\phi+\epsilon_r\right)n_{old} \leq \left(\phi+\epsilon_r\right)n_{new} \leq \left(\phi+\left(\epsilon'+\epsilon_r\right)\right)n_{new} &&
\end{flalign*}
On the other hand, 
\begin{flalign*}
\left(\phi-\epsilon_r\right)n_{old} \geq &\left(\phi-\epsilon_r\right)\frac{n_{new}}{\rho}= &&\\
&\left(\frac{\phi}{1+\epsilon'}-\frac{\epsilon_r}{1+\epsilon'}\right)n_{new}= &&\\
&\left(\phi-\frac{\phi\epsilon'}{1+\epsilon'}-\frac{\epsilon_r}{1+\epsilon'}\right)n_{new} \geq &&\\
&\left(\phi-\left(\epsilon'+\epsilon_r\right)\right)n_{new}
\end{flalign*}
Because $\phi \leq 1$ and $\epsilon' \geq 0$ then, $\frac{\phi\epsilon'}{1+\epsilon'} \leq \frac{\epsilon'}{1+\epsilon'} \leq 1$. \\
Therefore, the query returns a value within the range
\[\left[\left(\phi-\epsilon \right)n,\left(\phi+\epsilon \right)n\right]\] for $\epsilon \triangleq \epsilon_r + \epsilon'$.

\section{Evaluation}
\label{sec:eval}
\graphicspath{{../images/graphs/accuracy/} {../images/graphs/parameters/} {../images/graphs/throughput/}}

In this section we measure \mysketch's throughput and estimation accuracy. 
Section~\ref{ssec:setup} presents the experiment setup and methodology.
Section~\ref{ssec:tput} presents throughput measurements and discusses scalability.
Section~\ref{ssec:params} experiments with different parameter setting, examining how performance is affected by query freshness. 
Section~\ref{ssec:accuracy} presents an accuracy of estimation analysis.
Finally, Section~\ref{ssec:compare} compares \mysketch\ to the state-of-the-art. 

\subsection{Setup and Methodology}
\label{ssec:setup}
We implement \mysketch in C++. Our memory management system is based on IBR~\cite{Haosen_2018_IBR}, an interval-based approach to memory reclamation for concurrent data structures. 
The experiments were run on a NUMA system with four Intel Xeon E5-4650 processors, each with 8 cores, for a total of 32 threads (with hyper-threading disabled).

Each thread was pinned to a NUMA node, and nodes were first filled before overflowing to other NUMA nodes, i.e., $8$ threads use only a single node, while $9$ use two nodes with $8$ threads on one and $1$ on the second. The default memory allocation policy is local allocation, except for \mysketch's shared pointers. Each Gather\&Sort unit is allocated on a different NUMA node and threads update the G\&SBuffers allocated on the node they belong to.
The stream is drawn from a uniform distribution, unless stated otherwise. Each data point is an average of 15 runs, to minimize measurement noise.

\subsection{Throughput Scalability}
\label{ssec:tput} 
We measured \mysketch's throughput in three workloads:
(1) update-only, (2) query-only, and (3) mixed update-query. In the update-only workload, we update \mysketch with a stream of 10M elements and measure the time it takes to feed the sketch. For the other two workloads, we pre-fill the sketch with a stream of 10M elements and then execute the workload (10M queries only or queries and 10M updates) and measure performance. Figure~\ref{fig:throughput} shows \mysketch's throughput in those workloads with $k = 4096$ and $b = 16$,

\begin{figure*}[ht!] 
\centering
    \begin{subfigure}[]{\columnwidth}
        \centering
        \includegraphics[width=0.7\columnwidth,trim={0cm 0cm 1.9cm 2.6cm},clip] {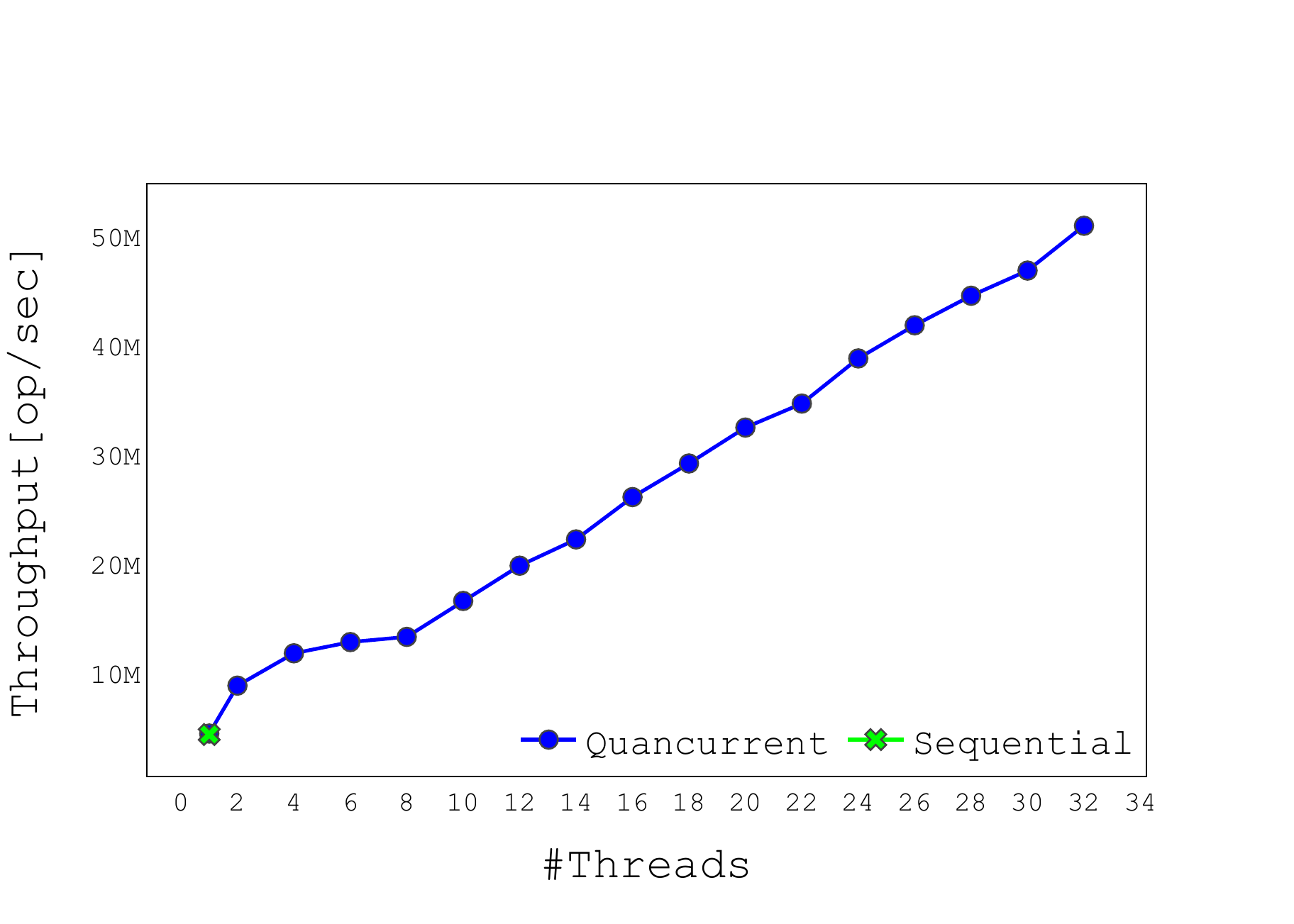}
        \caption{Update-only, 10M elements.}
        \label{fig:update_only_speedup}
    \end{subfigure}
    \hspace{-5em}
    \begin{subfigure}[]{\columnwidth}
        \centering
        \includegraphics[width=0.7\columnwidth,trim={0cm 0cm 1.9cm 2.6cm},clip] {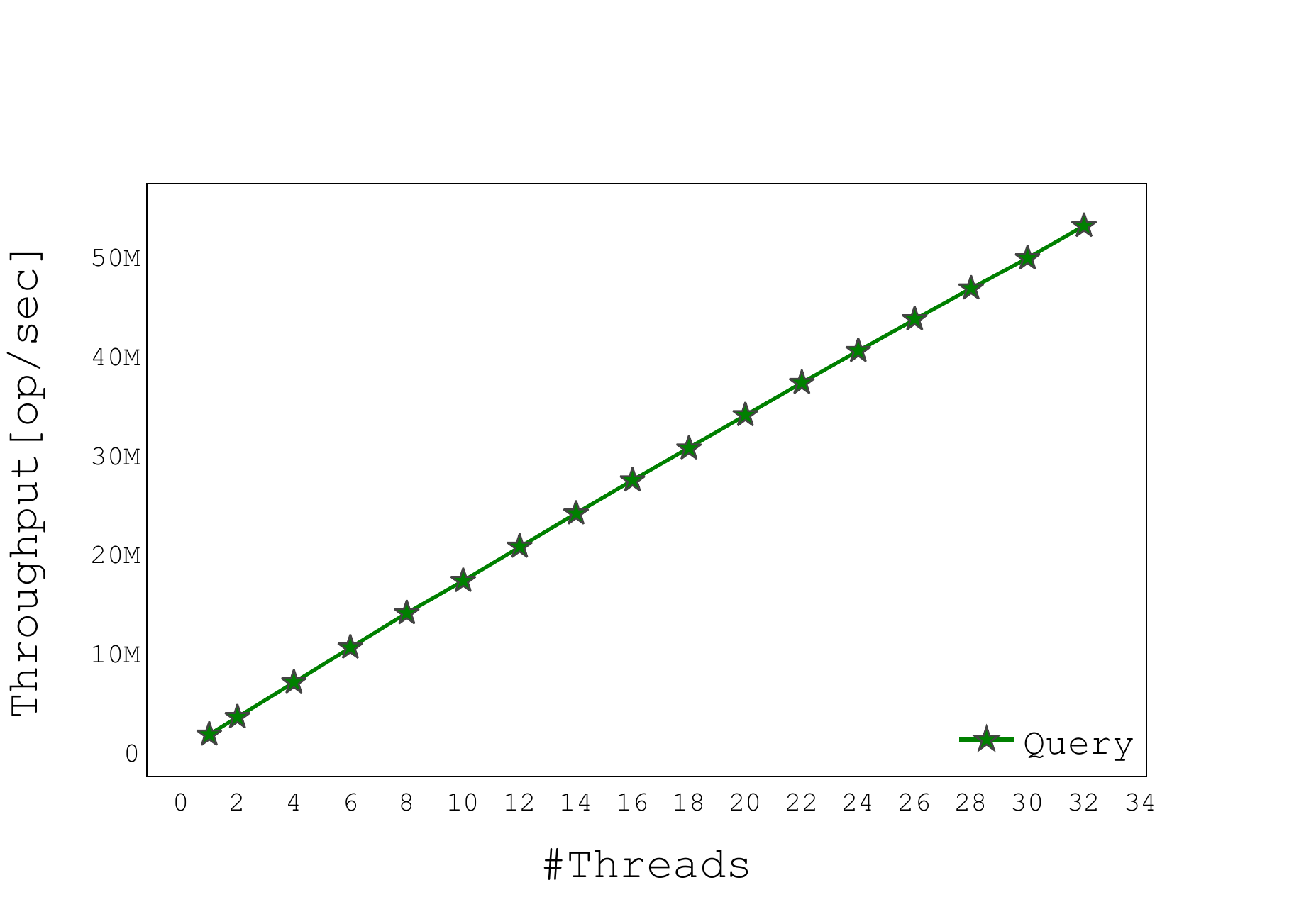}
        \caption{Query-only, 10M elements prefilled, 10M queries.}
        \label{fig:query_only_throughput}
    \end{subfigure}
\vfill
    \begin{subfigure}[]{\textwidth}
        \centering
        \includegraphics[width=0.8\textwidth,trim={0 0cm 0cm 1cm},clip]
        {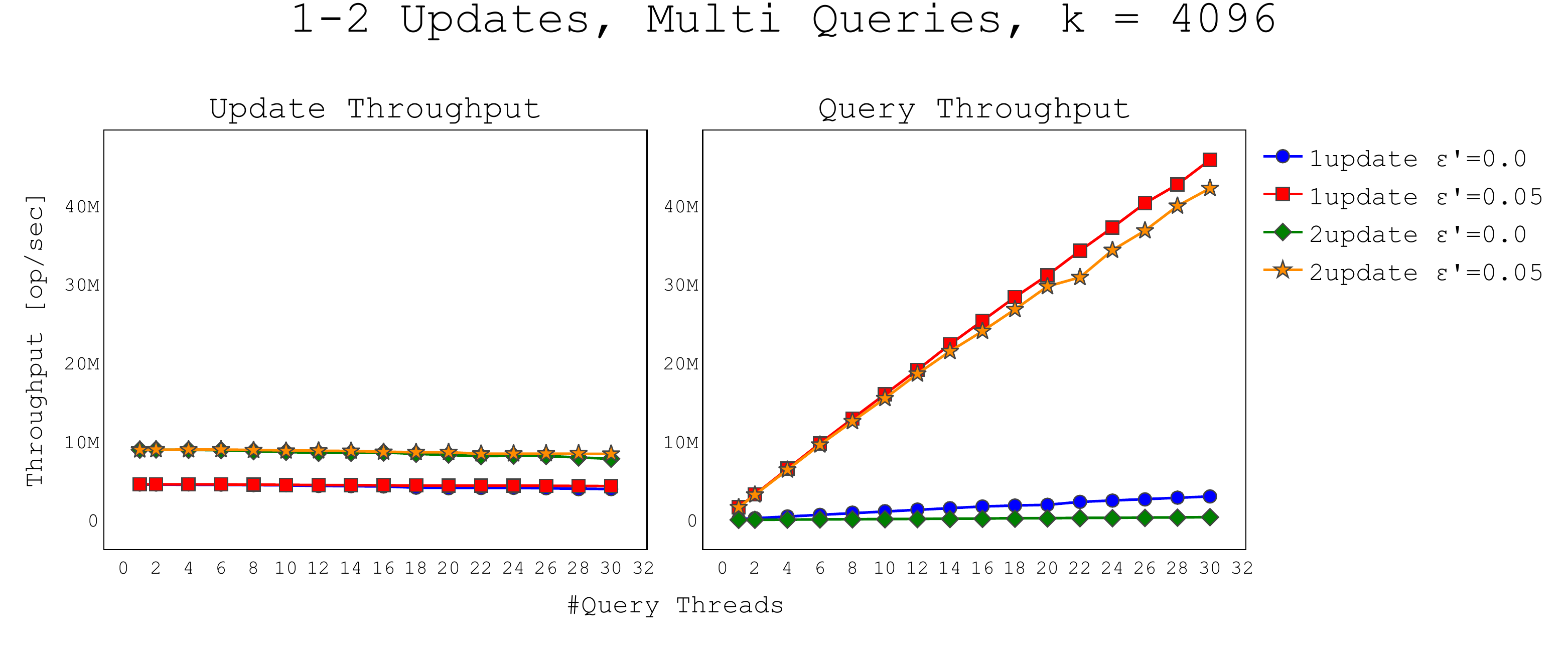}
        \caption{One or two update threads, up to 32 query threads, k=1024, 10M elements inserted after a pre-fill of 10M elements.}
        \label{fig:mixed_throughput}
    \end{subfigure}
    \caption{\mysketch throughput, k=4096, b=16.}
    \label{fig:throughput}
\end{figure*}

As shown in Figure \ref{fig:update_only_speedup}, \mysketch's performance with a single thread is similar to the sequential algorithm and with more threads it scales linearly, reaching $12x$ the sequential throughput with 32 threads. 
We observe that the speedup is faster with fewer threads, we believe this is because once there are more than $8$ threads, the shared object is accessed from multiple NUMA nodes.

Figure~\ref{fig:query_only_throughput} shows that, as expected, the throughput of the query-only workload scales linearly with the number of query threads, reaching $30x$ the sequential throughput with 32 threads.

In the mixed workload, the parameter $\rho$ is significant for performance - when $\rho = 0$ (no caching), a snapshot it reproduced on every query.
Figure~\ref{fig:mixed_throughput} presents the update throughput (left) and query throughput (right) in the presence of $1$ or $2$ update threads, with staleness thresholds of $\rho=0$ and $\rho=0.05$. We see that the caching mechanism ($\rho > 0$) is indeed crucial for performance. As expected, increasing the staleness threshold allows queries to use their local (possibly stale) snapshot, servicing queries faster and greatly increasing the query throughout. Furthermore, more update threads decrease the query throughput, as the update threads interfere with the query snapshot. 
Finally, increasing the number of query threads decreases the update throughput, as query threads interfere with update threads, presumably due to cache invalidations of the shared state.

\subsection{Parameter Exploration}
\label{ssec:params} 
We now experiment with different parameter settings with up to $32$ threads. 
In  Figure~\ref{fig: throughput_update_compare_k} we vary $k$ from $256$ to $4096$, in update-only scenario with $b = 16$ and up to $32$ update threads.
We see that the scalability trends are similar, and that \mysketch's throughput increases with $k$, peaking at $k = 2048$, after which increasing $k$ has little effect.
This illustrates the tradeoff between the sketch size (memory footprint) to  throughput and accuracy.

Figure~\ref{fig: throughput_update_compare_b} experiments with different local buffer sizes, from $1$ to $64$, in update-only scenario with $k = 4096$ and up to $32$ update threads. Not surprisingly, the throughput increases as the local buffer grows as this enables more concurrency. 

In Figure~\ref{fig: update_query_compare_rho} we vary $\rho$, in a mixed update-query workload with $8$ update threads, $24$ query threads, $k = 1024$, and $b = 16$, exploring another aspect of query freshness versus performance. As expected, increasing  $\rho$ has a positive impact on query throughput, as the cached snapshot can be queried more often.  Figure~\ref{fig: update_query_compare_rho} also shows the miss rate, which is the percentage of queries that need to re-construct the snapshot.

\begin{figure*}[]
    \begin{subfigure}{0.32\textwidth}
    \includegraphics[width=\textwidth,trim={0 0cm 1.9cm 2cm},clip]{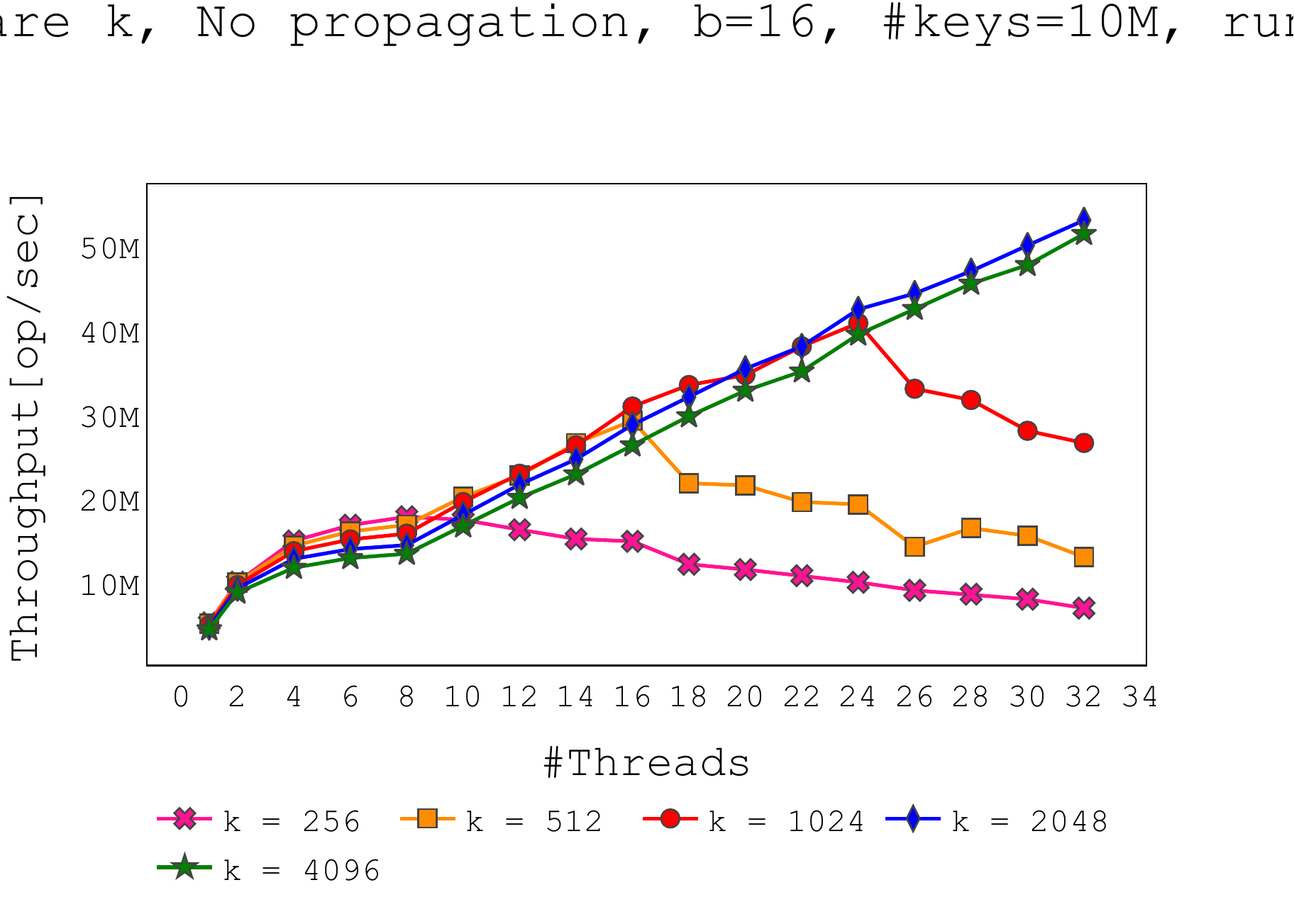}
    \caption{Update-only, \#keys=10M, b=16.}
    \label{fig: throughput_update_compare_k}
    \end{subfigure}
\hfill
    \begin{subfigure}{0.32\textwidth}
    \includegraphics[width=\textwidth,trim={0 0cm 1.9cm 2cm},clip]{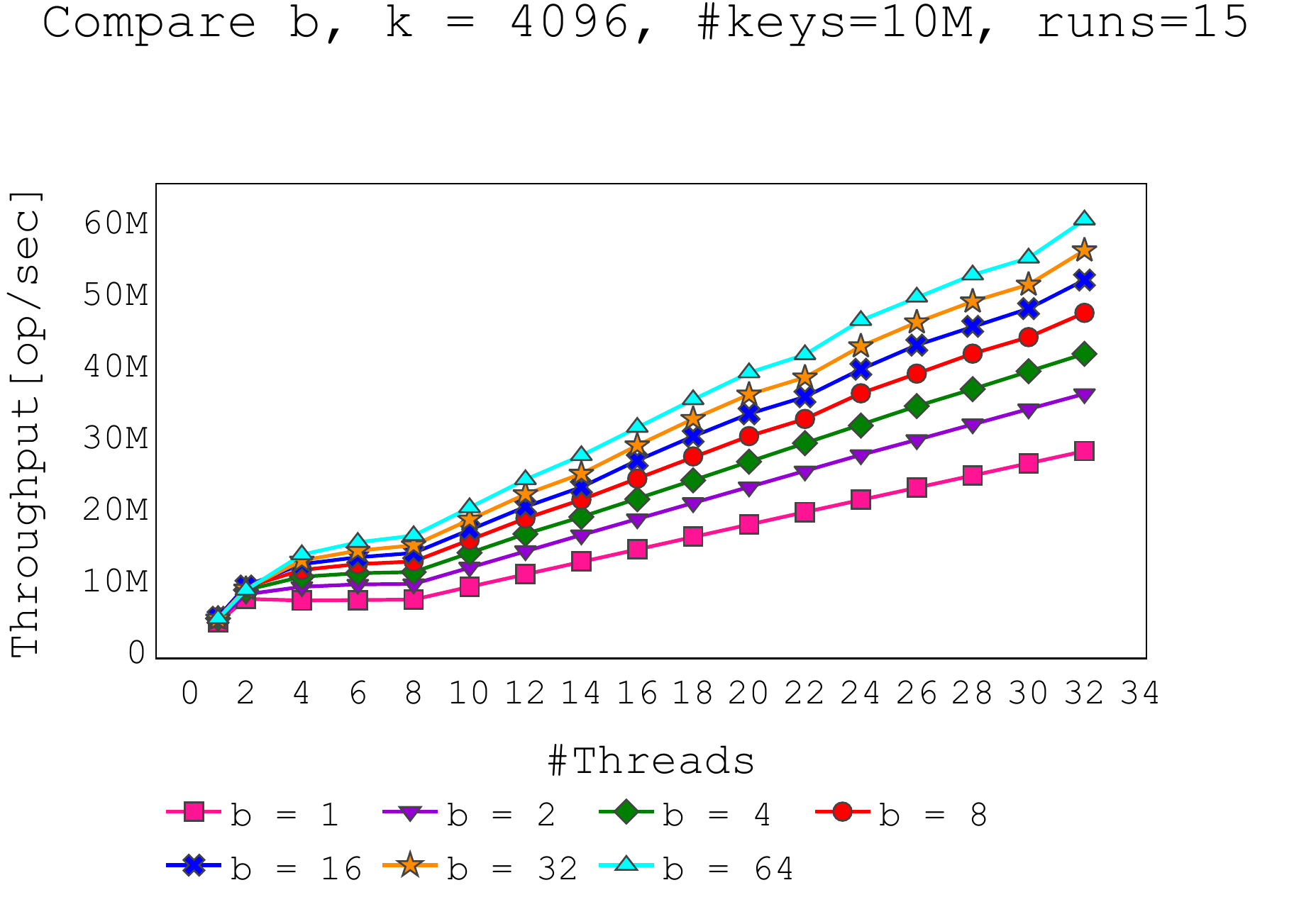}
    \caption{Update-only, \#keys=10M, k=4096.}
    \label{fig: throughput_update_compare_b}
    \end{subfigure}
\hfill
    \begin{subfigure}{0.32\textwidth}
    \includegraphics[width=\textwidth,trim={0 0cm 1.2cm 2cm},clip]
    {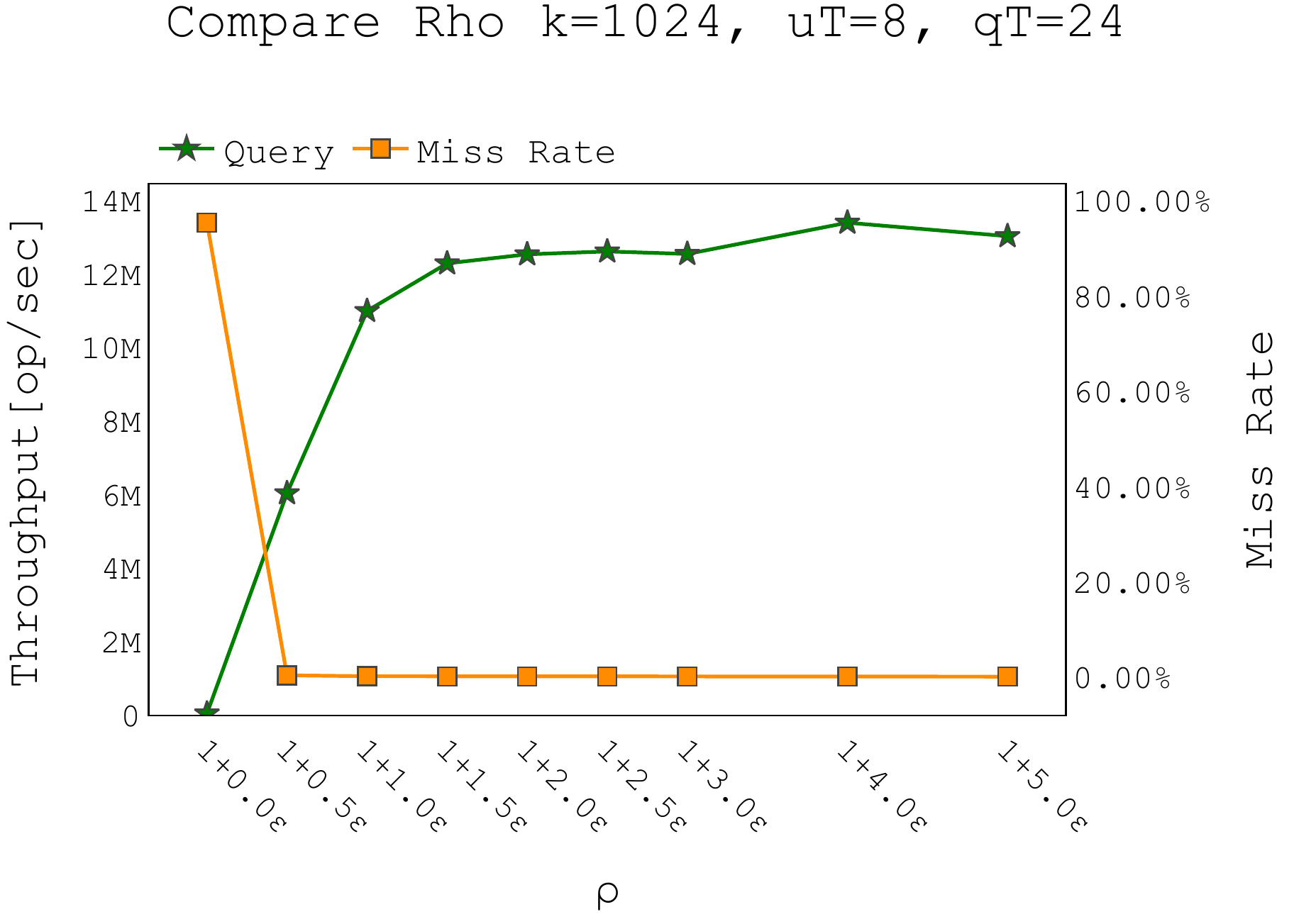}
    \caption{8 update threads, 24 query threads, \#keys=10M, k=1024 and b=16.}
    \label{fig: update_query_compare_rho}
    \end{subfigure}
\caption{\mysketch parameters impact.}
\label{fig: parameters_exploration}
\end{figure*}

\subsection{Accuracy}
\label{ssec:accuracy} 
To measure the estimate accuracy, we consider a query invoked in a quiescent state where no updates occur concurrently with the query. 
Figure~\ref{fig: accuracy_stderr} shows the standard error of 1M estimations in a quiescent state.
We see that \mysketch's estimations are similar to the sequential ones using the same $k$, and improves with larger values of $k$ as known from the literature on sequential sketches~\cite{mergeables_summaries}.
\begin{figure}[h]
    \centering
    \includegraphics[width=\columnwidth,trim={0 0cm 0cm 2cm},clip]
    {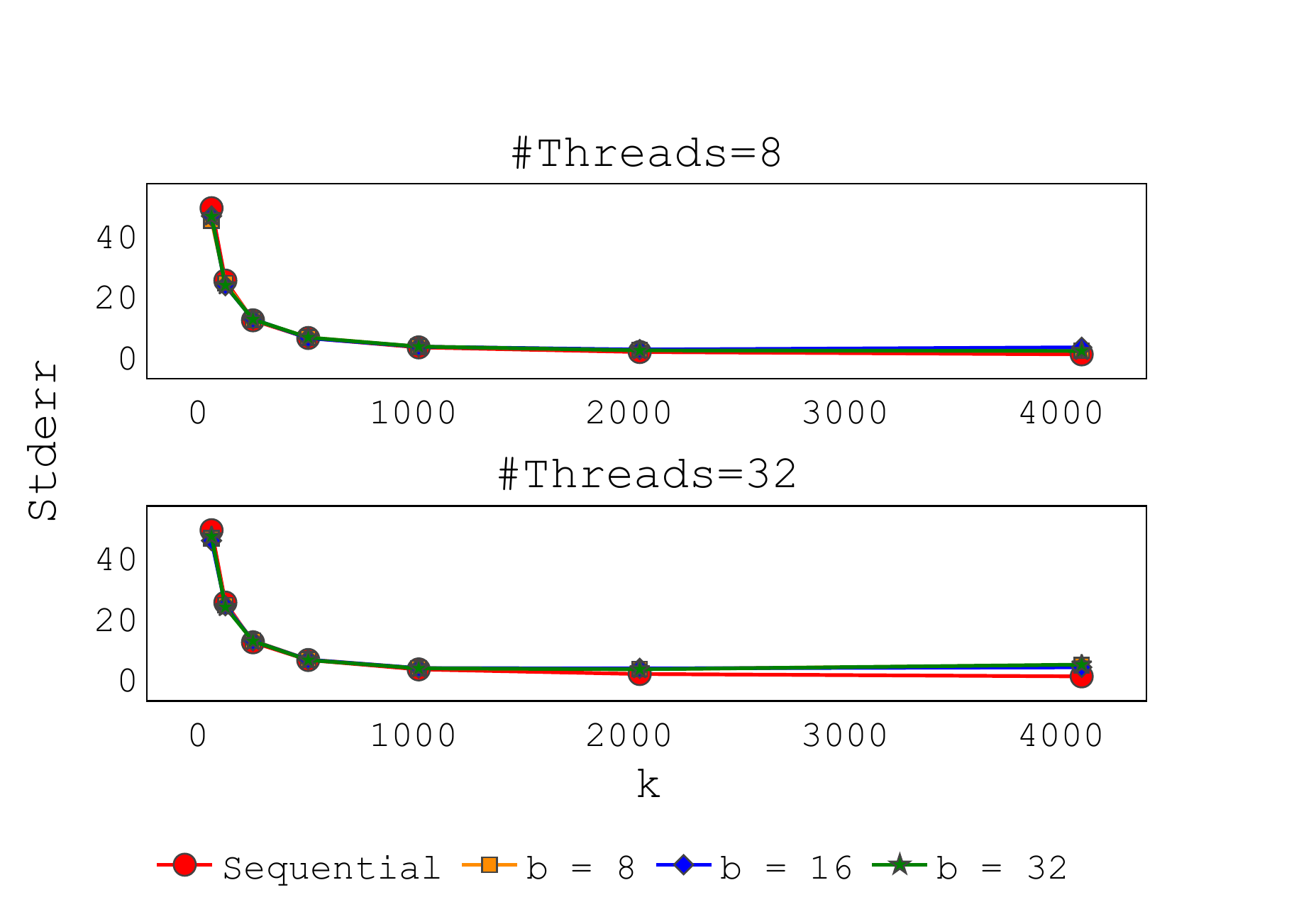}
    \caption{Standard error of estimation in quiescent state, keys=1M, runs=1000.}
    \label{fig: accuracy_stderr}
\end{figure}

To illustrate the impact of $k$ visually, Figure~\ref{fig:cdf} compares the distribution measured by \mysketch (red open-circles) to the exact (full information) stream distribution (green CDF filled-circles). In Figure~\ref{fig:intro-query-accuracy} (in the introduction), we depict the accuracy of \mysketch's estimate of a normal distribution with $k=1024$. Figure~\ref{fig:cdf_normal} (left) shows that when we reduce $k$ to $32$, the approximation is less tight while for $k=256$ (Figure~\ref{fig:cdf_normal} right) it is very accurate. We observe similar results for the uniform distribution in Figure~\ref{fig:cdf_uniform}. We experimented with additional distributions with similar results, which are omitted due to space limitations. 

\begin{figure}[h]
\centering
    \begin{subfigure}[]{\columnwidth}
        \centering
        \includegraphics[width=\columnwidth,trim={0.1cm 0.2cm 1.5cm 1cm},clip]
        {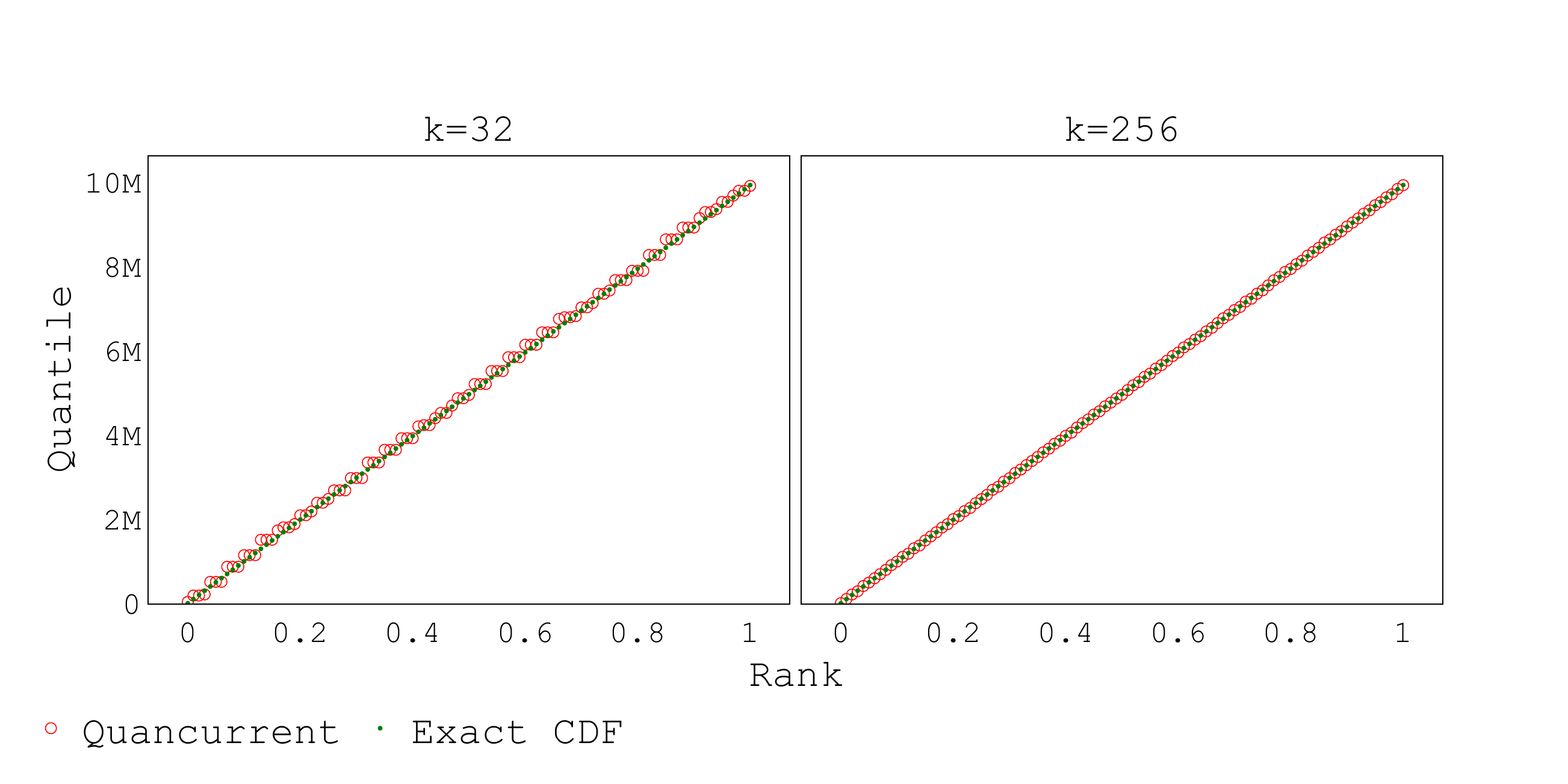}
        \caption{Uniform distribution.} \label{fig:cdf_uniform}
    \end{subfigure}
    
    \begin{subfigure}[]{\columnwidth}
        \centering
        \includegraphics[width=\columnwidth,trim={0.1cm 0.2cm 1.5cm 1cm},clip]
        {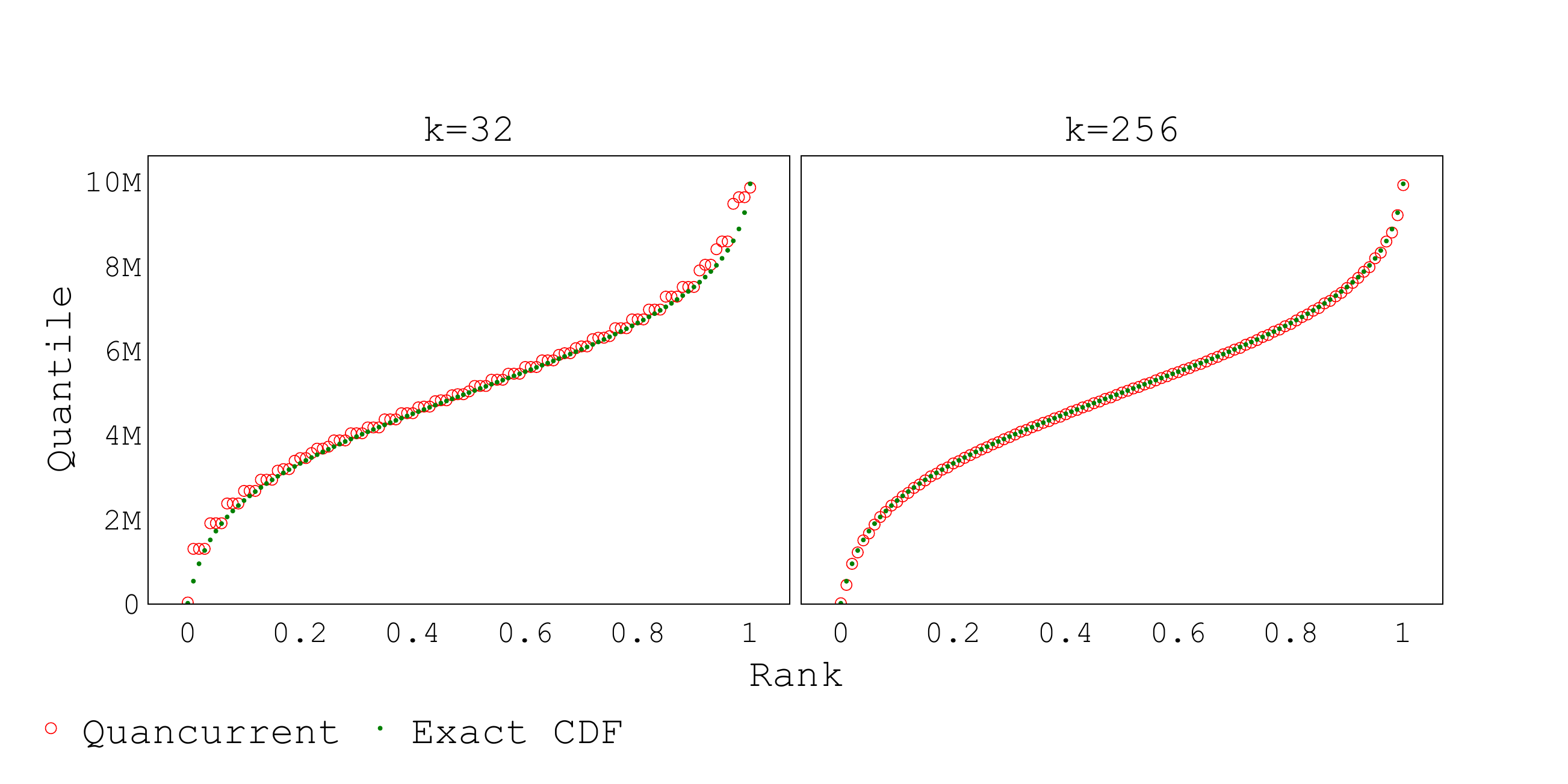}
        \caption{Normal distribution.} \label{fig:cdf_normal}
    \end{subfigure}

\caption{\mysketch quantiles vs. exact CDF, with 32 threads, b=16, and a stream size of 10M.} \label{fig:cdf}
\end{figure}

\subsection{Comparison to state of the art}
\label{ssec:compare} 
Finally, we compare \mysketch against a concurrent Quantiles sketch implemented within the FCDS framework~\cite{Rinberg_2020_fast_sketches}, the only previously suggested concurrent sketch we know that supports quantiles. Figure~\ref{fig:compare_FCDS_k4096} shows the throughput results (log scale) for $8$, $16$, $24$ and $32$ threads and $k=4096$.
FCDS satisfies relaxed consistency with a relaxation of up to $2NB$, where $N$ is the number of worker threads and $B$ is the buffer size of each worker. Recall that \mysketch's relaxation is at most $r = 4kS+(N-S)b$. For a fair comparison, we compare the two algorithms in settings with the same relaxation.

For $8$ update threads ($S=1$) and $b=2048$, the relaxation of \mysketch is $r\approx 30K$. The same relaxation in FCDS with the same number of update threads is achieved with a buffer size of $B=1920$.
With $8$ threads, \mysketch reaches a throughput of $22M\ ops/sec$ for a relaxation of $30K$ whereas FCDS reaches a throughput of $25M\ ops/sec$ for a much larger relaxation of $137K$. Also, with $32$ threads, \mysketch reaches a throughput of $62M\ ops/sec$ for a relaxation of $122K$, but FCDS only reaches a throughput of $19M\ ops/sec$ with a relaxation of more than $500K$.

Overall, we see that FCDS requires large buffers (resulting in a high relaxation and low query freshness) in order to scale with the number of threads. This is because, unlike \mysketch, FCDS uses a single thread to propagate data from all other threads' local buffers into the shared sketch. The propagation involves a heavy merge-sort, so large local buffers are required in order to offset it and keep the working threads busy during the propagation. In contrast, \mysketch's propagation is collaborative, with merge-sorts occurring concurrently both at the NUMA node level (in Gather\&Sort buffers) and at multiple levels of the shared sketch.

\begin{figure}[b]
\includegraphics[width=\columnwidth,trim={0cm 0.5cm 2cm 0.4cm},clip]
{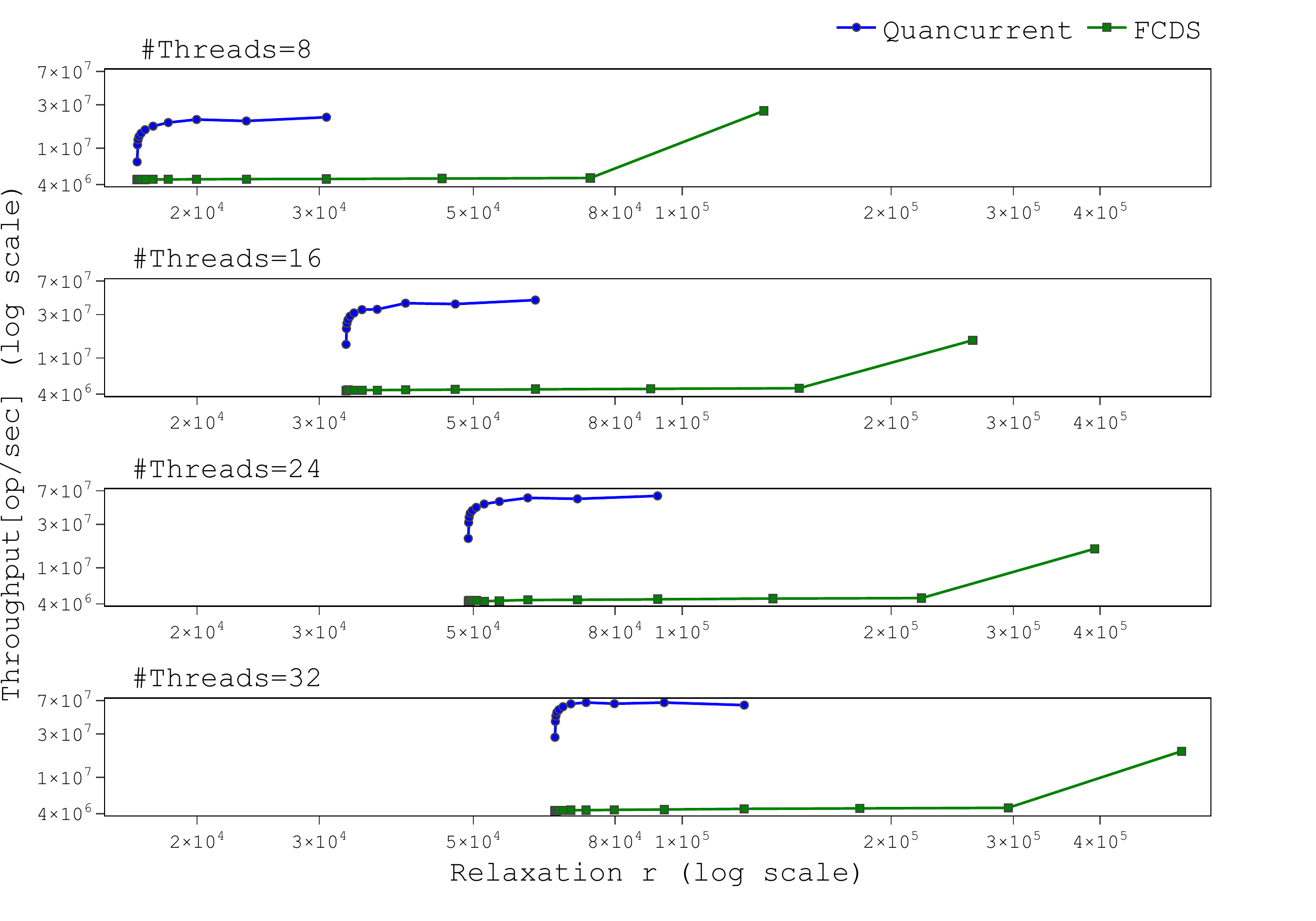}
\caption{\mysketch vs. FCDS, k = 4096.}
\label{fig:compare_FCDS_k4096}
\end{figure}






\section {Conclusion}
We presented Quancurrent, a concurrent scalable Quantiles sketch. We have evaluated it and shown it to be linearly scalable for both updates and queries while providing accurate estimates. Moreover, it achieves higher performance than state-of-the-art concurrent quantiles solutions with better query freshness.
Quancurrent’s scalability arises from allowing multiple threads to concurrently engage in merge-sorts, which are a sequential bottleneck in previous solutions.
We dramatically reduce the synchronization overhead by accommodating occasional data races that cause samples to be duplicated or dropped, a phenomenon we refer to as holes. This approach leverages the observation that sketches are approximate to begin with, and so the impact of such holes is marginal.
Future work may leverage this observation to achieve high scalability in other sketches or approximation algorithms.
\label{sec: conclusion}


\balance
\bibliographystyle{plain}
\bibliography{references}

\onecolumn
\newpage
\appendix

\section{Proofs}
\label{sup: sup_intro}
In Section~\ref{sup: prelim} we present preliminaries needed for our proofs. In Section~\ref{sup: query_proof} we prove that the query operation collects a consistent snapshot at some point in the execution. In Section~\ref{sup: correctness_proof} we prove that Quancurrent is strongly linearizable with respect to the relaxed specification. Finally, Section~\ref{sup: holes-analysis} provides proofs for the claims made in Section~\ref{sec:analysis}.

\subsection{Preliminaries}
\label{sup: prelim}
We consider a shared memory model, where a finite number of threads execute \emph{operations} on shared \emph{objects}. An operation consists of an \emph{invocation} and a matching \emph{response}. A \emph{history} \(\mathnormal{H}\) is a finite sequence of operation invocation and response steps. A history \(\mathnormal{H}\) defines a partial order \(\prec_\mathnormal{H}\) on operations: Given operations \(op\) and \(op'\), \(op \prec_H op'\) if and only if \(response(op)\) precedes \(invocation(op')\) in \(\mathnormal{H}\). Two operations that do not precede each other are \emph{concurrent}. In a \emph{sequential history}, there are no concurrent operations. An object is specified using a \emph{sequential specification} $\mathcal{H}$, which is the set of its allowed sequential histories.
An operation \emph{op} is \emph{complete} in a history \(\mathnormal{H}\) if both \(invocation(op)\) and its matching \(response(op)\) are in \(\mathnormal{H}\).
A \emph{linearization} of a concurrent history \(\mathnormal{H}\), is a sequential history $H'$ such that: (1) $H' \in \mathcal{H}$, (2) $H'$ contains all completed operations and possibly additional non-complete ones, after adding matching responses, and, (3) $\prec_{H'}$ extends $\prec_H$. A correctness condition for randomized algorithms \emph{strong linearizability}~\cite{strong_linearizability}, defined as follows: 
\begin{definition}[Strong linearizability] \label{Def: strong_linearizability}
A function \(f\) mapping executions to histories is prefix-preserving if for any two executions \s, \s', where \s is a prefix of \s', \fs is a prefix of \fstag.

An object A is a strongly linearizable if there is a prefix-preserving function \(f\) that maps every history \(\mathnormal{H}\) of $A$ to a linearization of \(\mathnormal{H}\).
\end{definition}

Our algorithm is randomised and we consider a weak adversary that determines the scheduling without observing the coin-flips.

As previously mentioned, we adopt a flavor of \emph{relaxed semantics}, as defined in \cite{Henzinger_2013_Quantitative_Relaxation}:
\begin{definition}[$r$-relaxation] \label{Def: relaxation}
A sequential history $H$ is an
$r$-relaxation of a sequential history H', if H is comprised of
all but at most $r$ of the invocations in $H'$ and their responses,
and each invocation in $H$ is preceded by all but at most $r$ of the invocations that precede the same invocation in $H'$.
The $r$-relaxation of a sequential specification $\mathcal{H}$ is the set of histories that have $r$-relaxations in $\mathcal{H}$:
\[{H}^r \triangleq \left\{ H' \mid \exists H \in \mathcal{H}: H \text{ is an } r\text{-relaxation of } H'\right\}. \]
\label{def:r-relaxtion}
\end{definition}


\newpage
\subsection{Query Proof}
\label{sup: query_proof}
Let $\sigma$ be an execution of \mysketch and let $tm1$ and $tm2$ be responses of two read operations of $tritmap$, $op_1$ and $op_2$, respectively, in $\sigma$, such that $op_1$ precedes $op_2$. Let $\mathnormal{A}$ be the stream represented by $tm1$. We will prove that the constructed \emph{snapshot} in Algorithm~\ref{alg: sl_query}, Lines~\ref{line: query_estimate_start}-\ref{line: query_estimate_end}, summarizes the same stream $\mathnormal{A}$. We denote by $|\mathnormal{A}|$ to be the stream size and by $|snapshot|$ to be the size of the stream currently represented by $snapshot$.
Let \emph{snapLevels} be the set all levels read between $op_1$ and $op_2$.

First, we show that $tm1$ and $tm2$ represent the same stream, i.e.  $\mathnormal{A}$.

\begin{lemma}
Let $tm1$ and $tm2$ be responses of two read operations of $tritmap$.
If $tm1$ and $tm2$ represent streams with equal size then $tm1$ and $tm2$ represent the same stream.
\end{lemma}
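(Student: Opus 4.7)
The plan is to classify the operations that modify $tritmap$ and analyze how each one affects the stream size represented by $tritmap$. The only writes to $tritmap$ are the three DCASes in the algorithm: the Stage~2 batch-update DCAS in Algorithm~\ref{alg: batch_update} Line~\ref{Line:insert_batch} (which performs $tritmap[0]: 0 \to 2$), and the two Stage~3 propagation DCASes in Algorithm~\ref{alg: propagate} Lines~\ref{Line:next_full_DCAS} and~\ref{Line:next_empty_DCAS} (which perform $tritmap[l,l{+}1]: [2,1] \to [0,2]$ and $tritmap[l,l{+}1]: [2,0] \to [0,1]$, respectively).

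Next I would perform the three short arithmetic checks showing how each transition changes the stream size, defined as $\sum_i tritmap[i]\cdot k\cdot 2^i$. The Stage~2 DCAS strictly increases the stream size by exactly $2k$. The $[2,1]\to[0,2]$ propagation changes the contribution of levels $l,l{+}1$ from $2k\cdot 2^l + k\cdot 2^{l+1} = 4k\cdot 2^l$ to $0 + 2k\cdot 2^{l+1} = 4k\cdot 2^l$. The $[2,0]\to[0,1]$ propagation changes it from $2k\cdot 2^l + 0 = 2k\cdot 2^l$ to $0 + k\cdot 2^{l+1} = 2k\cdot 2^l$. Thus Stage~3 DCASes exactly preserve the represented stream size, and the represented stream size is monotonically non-decreasing over the execution, strictly increasing only at Stage~2 DCASes, always in increments of $2k$.

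From this I conclude the lemma as follows. Since $op_1$ precedes $op_2$ in $\sigma$ and both reads report the same stream size, no Stage~2 DCAS was linearized between them (otherwise the size would have grown and, by monotonicity, could not have returned to its earlier value). Consequently, no new stream element was ingested into the levels hierarchy between $op_1$ and $op_2$: the prefix of the input stream that has been absorbed into $levels[\cdot]$ at the time of $op_2$ coincides with that at $op_1$. In the sense used by the query algorithm---the stream summarized by the tritmap, i.e., the set of batches already inserted into the levels---$tm1$ and $tm2$ represent the same stream $A$, even though they may be syntactically different trit sequences because intervening Stage~3 propagations may have reshuffled weight between levels.

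The only mildly delicate point is recognizing that ``represent the same stream'' does not mean $tm1=tm2$ as trit vectors: propagations legitimately change $tritmap$ without changing the underlying ingested prefix. Once this is stated and the three transition calculations above are carried out, the proof is essentially immediate from monotonicity. I do not expect any real obstacle; the lemma rests entirely on the invariant that Stage~3 DCASes are ``stream-size preserving'' by design, which follows from the weight-doubling structure of the sketch.
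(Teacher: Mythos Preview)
Your proof is correct and follows essentially the same line as the paper's, but is considerably more explicit. The paper's proof is a one-liner: it simply invokes that $tritmap$, viewed as a base-3 integer, is monotonically increasing under every DCAS (a fact the paper also states as a separate lemma). From this, the intended inference is that the represented stream only grows, so equal stream sizes at two ordered reads force the same stream.

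Your version works directly with the stream-size function $\sum_i tritmap[i]\cdot k\cdot 2^i$ rather than with the base-3 integer value, and you carry out the three transition calculations explicitly to show Stage~3 DCASes are size-preserving while the Stage~2 DCAS strictly adds $2k$. This is a cleaner route to the conclusion, and it also makes explicit the point you flag at the end---that $tm1$ and $tm2$ need not be equal as trit vectors, only that no batch was ingested between them. The paper's terse proof leaves this (and the size-preservation of propagation steps) implicit.
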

\begin{proof}
By definition the variable tritmap is 3-based, 31 digits non-negative integer that is atomically updated by DCAS operation and is only increasing.   
\end{proof}

Second, we show that $snapLevels$ contains all sampled elements in the sketch, immediately after $tm2$, summarizing the stream $\mathnormal{A}$.

\begin{lemma}\label{Lem: snap_no_miss}
The set of levels, $snapLevels$, read between the two \emph{tritmap}'s reads, $op_1$ and $op_2$, contains all sampled elements summarizing the stream $\mathnormal{A}$.
\end{lemma}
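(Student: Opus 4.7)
The plan is to show that the total weighted sum of elements in $snapLevels$ is at least $|A|$, so that $snapLevels$ contains a representation (possibly with duplications) of every sampled element needed to summarize $A$. The construction in Algorithm~\ref{alg: sl_query} then extracts the right subset top-down, but the lemma only requires the inclusion property.

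The key observation is a monotonicity property: propagations move elements \emph{only upward} in the hierarchy, and they \emph{preserve} the total weighted count. Specifically, a propagation from level $l$ to an empty level $l+1$ turns $2k$ elements of weight $2^l$ into $k$ elements of weight $2^{l+1}$ (equal total weight), and a merge into a full level $l+1$ turns $2k \cdot 2^l + k \cdot 2^{l+1}$ into $2k \cdot 2^{l+1}$ (again equal). Batch updates (Stage~2) only add $2k$ elements to level $0$, which strictly increases total weight. Therefore, for any $i$, the cumulative weight residing at levels $\geq i$, denoted $W_{\geq i}(t) \triangleq \sum_{j \geq i} w_j^t$ where $w_j^t$ is the weighted count of level $j$ at time $t$, is non-decreasing in $t$ between reads of levels below $i$ (such events either preserve $W_{\geq i}$, or feed new mass from level $i-1$ into level $i$).

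Let $T_i$ denote the time at which line~\ref{Line: read_snap} reads $levels[i]$, so $t_1 \leq T_0 \leq T_1 \leq \dots \leq T_{MAX\_LEVEL} \leq t_2$. Using monotonicity of $W_{\geq i}$, I would argue by telescoping:
\begin{align*}
|A| \;=\; W_{\geq 0}(t_1) \;&\leq\; W_{\geq 0}(T_0) \;=\; w_0^{T_0} + W_{\geq 1}(T_0) \\
\;&\leq\; w_0^{T_0} + W_{\geq 1}(T_1) \;=\; w_0^{T_0} + w_1^{T_1} + W_{\geq 2}(T_1) \\
\;&\leq\; \dots \;\leq\; \sum_{i=0}^{MAX\_LEVEL} w_i^{T_i},
\end{align*}
where each inequality uses $T_i \leq T_{i+1}$ together with monotonicity of $W_{\geq i+1}$. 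Since $\sum_i w_i^{T_i}$ is precisely the total weight of $snapLevels$, this gives the desired bound: every element that was present in the representation of $A$ at time $t_1$ is captured (possibly several times) in $snapLevels$.

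The main obstacle will be precisely formalizing the monotonicity of $W_{\geq i}$ in the presence of concurrency, because the algorithm exhibits a brief window between a successful DCAS (which atomically updates \emph{tritmap} and the target $levels[l+1]$) and the subsequent clearing of $levels[l]$ on line~\ref{Line:next_full_DCAS} (or line~\ref{Line:next_empty_DCAS}). During that window, the weighted reading of $levels$ could apparently \emph{exceed} the weight implied by \emph{tritmap}, but this is favorable for our bound (it only adds redundant mass to $snapLevels$) and is later filtered out by the top-down size check in line~\ref{Line:add_to_snapshot}. The careful step is therefore to distinguish between the two notions of weight (state of $levels$ vs.\ state of \emph{tritmap}) and verify that neither propagations nor overwrites ever remove weight from strata $\geq i$; a case analysis on each atomic step (Stage~2 DCAS, Stage~3 DCAS, and the level-clearing write) suffices.
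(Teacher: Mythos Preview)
Your weight-telescoping argument establishes a weaker statement than the lemma claims. The lemma asserts that $snapLevels$ contains (a sampled representative of) every element summarizing $A$; what your chain of inequalities yields is only that the total \emph{weight} of $snapLevels$ is at least $|A|$. The closing leap---``this gives the desired bound: every element \dots\ is captured''---does not follow from a numeric inequality: in principle a collection of level-arrays can meet the weight bound by duplicating some sub-streams while omitting others, and the very next lemma in the paper relies on per-element containment to conclude that any excess weight in a candidate level must be duplication. The paper's proof instead fixes an arbitrary element $x$ residing at level $j$ at time $tm1$ and argues directly that, because propagations move content only upward and the levels are scanned in increasing order, $x$ (or its sampled descendant) is read at some level $\geq j$. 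Your telescoping can be salvaged by replacing the scalar $W_{\geq i}$ with the \emph{set} of elements represented at levels $\geq i$ and replacing ``$\leq$'' with ``$\subseteq$''; that turns it into a global restatement of the paper's per-element chase, but as written it proves only the weight bound.

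There is also a technical gap in the monotonicity step itself. For the actual-level weight, $W_{\geq i}$ is \emph{not} non-decreasing: the clearing write $levels[l] \gets \bot$ executes strictly after the DCAS that already deposited the mass at level $l{+}1$, so if that DCAS lands before $T_i$ and the clear lands in $(T_i,T_{i+1}]$ with $l \geq i{+}1$, you obtain $W_{\geq i+1}(T_{i+1}) < W_{\geq i+1}(T_i)$ and the telescoping inequality fails. This is exactly the window you flag as ``favorable'', but it is favorable only for the final sum, not for the intermediate comparisons. The repair is to telescope with the \emph{tritmap}-based weight $W^{tm}_{\geq i}$, which is genuinely monotone because each DCAS atomically shifts tritmap mass upward and there is no separate tritmap-clearing step, and then invoke $w_i^{T_i} \geq w^{tm}_i(T_i)$ only once at the end.
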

\begin{proof}
Let $x$ be an element in level $j$ immediately after $tm1$.
If $x$ exists in level $j$ during the read of sketch's levels in Algorithm~\ref{alg: sl_query} Line~\ref{Line: read_snap}, then we are done.
Otherwise, a propagation occurred in between the reads such that level $j$ was merged with the next level and cleared. During this merge, all level $j$'s elements, including $x$, were sampled and propagated to level $j+1$. If $x$ exists in level $j+1$ in the set $snapLevels$ then we are done, if not, we apply the above argument again. This continues up to MAX\_LEVEL and therefore snap contains the element $x$.
\end{proof}

The following refers to the construction of the subset \emph{snapshot} from level MAX\_LEVEL to 0:

\begin{lemma} \label{Lem: all_dup}
If level $j$ contains an element represented by the snapshot thus far, all elements in level $j$ are also (already) represented by this snapshot.
\end{lemma}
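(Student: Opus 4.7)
The intuition is that propagation moves the \emph{entire} content of a level at once (with sampling and weight doubling), so individual elements cannot migrate in isolation: if one element of $snapLevels[j]$ appears in a higher snapshot level, it is because the whole array at level $j$ was propagated up, carrying with it the full stream-size contribution of $snapLevels[j]$.

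The plan is to fix the time $t_j$ at which $snapLevels[j]$ was read and the time $t_{j'}$ at which $snapLevels[j']$ was read, for the $j' > j$ such that $snapLevels[j']$ has already been added to the snapshot and shares an element $x$ with $snapLevels[j]$. I would first observe that once the pointer $levels[j]$ is read, the array it points to is immutable: the level's content can only change through pointer swings performed atomically via DCAS (a propagation out from $j$ that clears the pointer, or a propagation into $j$ from $j-1$ that installs a new array). So $snapLevels[j]$ faithfully represents what was at level $j$ at time $t_j$.

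Next I would trace how $x$ travels from level $j$ to level $j'$. Since elements move up only via propagation, there must be a first propagation from $j$ (at some time $t \ge t_j$) whose source array is the one pointed to by $snapLevels[j]$ (possibly after being merged with an incoming array from level $j-1$, which only adds elements; it does not remove any of $snapLevels[j]$'s elements). This propagation samples the entire contents, doubling the weight of the survivors, and either installs them at level $j+1$ or merges them with what is already there. In all cases the total stream-size contribution of $snapLevels[j]$, namely $|snapLevels[j]|\cdot 2^{j}$, is preserved in the outcome at level $j+1$. By induction on the length of the propagation chain leading to $x$'s appearance at level $j'$, the same preservation holds at every subsequent step: the stream-size contribution of $snapLevels[j]$ is fully accounted for by the survivors (together with whatever other content they may have been merged with), now residing at level $j'$ in $snapLevels[j']$.

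Finally, since $snapLevels[j']$ has been added to the snapshot, its stream-size contribution, and hence the contribution of $snapLevels[j]$ that is subsumed by it, is represented by the snapshot. Therefore every element of $snapLevels[j]$ is already represented, as claimed. The main obstacle I expect is carefully handling the case where the propagation from $j$ is preceded by a merge with an incoming array from level $j-1$, so that $snapLevels[j]$'s contribution gets intertwined with other content; this requires arguing that merges only add contributions and that the sampling distributes doubled weight uniformly, so $snapLevels[j]$'s portion is never lost. A secondary subtlety is the ordering of reads (levels are read bottom-up while the snapshot is built top-down); I would rely on the fact, proved in Lemma~\ref{Lem: snap_no_miss} and the construction of $snapLevels$, that the chain of propagations considered here lies entirely within the window $[t_j, t_{j'}]$ and thus cannot introduce elements into $snapLevels[j]$ that were not originally present at time $t_j$.
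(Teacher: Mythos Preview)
Your proposal is correct and follows essentially the same line as the paper: you trace the propagation chain that carried the shared element from level $j$ up to some level $j'>j$ already in the snapshot, and use the fact that each propagation step moves (the representation of) the \emph{entire} level, so if one element of $snapLevels[j]$ is represented upstairs, all of them are. The paper's proof is the same argument stated in two sentences; your version unpacks it and explicitly handles the merge-from-below case, which the paper leaves implicit.

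One small inaccuracy worth tightening: your ``secondary subtlety'' asserts that the propagation chain lies entirely within the window $[t_j,t_{j'}]$. This need not be true. Because the clear of $levels[j]$ (setting it to $\bot$) happens \emph{after} the DCAS that already installed the merged array at $j{+}1$, the read at time $t_j$ can return an array whose propagation out of level $j$ began before $t_j$. Fortunately, the argument does not actually depend on this timing claim: all you need is that whenever $snapLevels[j]$'s content (or a superset obtained by a later merge from below) is propagated, it is propagated in its entirety, so no element of $snapLevels[j]$ can be represented at a higher snapshot level without all of them being so represented. You can safely drop the timing-window claim and the appeal to Lemma~\ref{Lem: snap_no_miss}; the atomicity of level-wide propagation is the only ingredient needed.
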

\begin{proof}
During a call to the procedure \emph{propagate} with level $j$, the elements in that level are sampled and merged with level $j+1$, resulting in level $j+1$ representing also the elements of (former) level $j$.
Let $x$ be an element represented by level $j$ in the set $snapLevels$.
$x$ is also represented by level $i$ such that level $i$ is in the current \emph{snapshot} and $i>j$. Consider the process of propagation from level $j$ to level $i$, it follows from the above that level $i$ also represents all the elements in level $j$.
\end{proof}

Each element in level $j$ represents $2^i$ elements from the processed stream.

\begin{lemma}\label{Lem: all_in_no_dup}
If level $j$ represents new elements that are not represented by snapshot, the size of the sub-stream left to represent is at least the size of the representation of level $j$.

If level $j$ represents elements already represented by snapshot, the sub-stream left to represent is smaller than the representation of level $j$.
\end{lemma}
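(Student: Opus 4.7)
The plan is to prove both cases by an inductive argument coupled with a classification of the possible states of $\mathit{snapLevels}[j]$ relative to the snapshot of the tritmap. Let $t_j \triangleq \mathit{tm1}[j] \in \{0,1,2\}$ and $m_j \triangleq \mathit{snapLevels}[j].\mathrm{size}()/k \in \{0,1,2\}$, so that the size of the representation of level $j$ in $\mathit{snapLevels}$ is $\ell_j = m_j \cdot 2^j \cdot k$ and the canonical stream size is $n = \mathit{tm1}.\mathrm{streamSize}() = \sum_i t_i \cdot 2^i \cdot k$. I first classify the possible pairs $(t_j, m_j)$ by inspecting the DCAS operations that touch $\mathit{levels}[j]$ in Algorithms~\ref{alg: batch_update} and~\ref{alg: propagate}: a $k$-sized array is written to $\mathit{levels}[j]$ only by an empty-path DCAS that atomically sets $t_j = 1$; a $2k$-sized array only by a merge-path or batch-update DCAS that atomically sets $t_j = 2$; and $t_j$ returns to $0$ only via a propagate-up DCAS, which leaves $\mathit{levels}[j]$ untouched (the subsequent clear being a separate write). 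Because the transition $t_j : 1 \to 0$ must pass through $2$ and the $1 \to 2$ step overwrites $\mathit{levels}[j]$ with a $2k$-sized array, either $m_j = t_j$ (the \emph{canonical} case) or $t_j = 0$ and $m_j = 2$ (the \emph{stale} case where $\mathit{levels}[j]$ retains a $2k$ array not yet cleared). In particular, $m_j = 1$ forces $t_j = 1$, so a stale $k$-sized residue is impossible.

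Next, I prove by induction on the processing order (MAX\_LEVEL down to $0$) the invariant that just before level $j$ is examined, $|\mathit{snapshot}| = \sum_{i>j} t_i \cdot 2^i \cdot k$. The inductive step is the present lemma applied at a strictly higher level, giving $n - |\mathit{snapshot}| = \sum_{i \le j} t_i \cdot 2^i \cdot k$. For the first case (level $j$ has new elements), Lemma~\ref{Lem: all_dup} combined with the propagation-chain observation in its proof rules out the stale case: a stale level's content was propagated up the hierarchy and the receiving canonical level at some $j' > j$ is already in $\mathit{snapshot}$ by the invariant, contradicting ``new''. Hence $m_j = t_j$, $\ell_j = t_j \cdot 2^j \cdot k$, and
\[
    n - |\mathit{snapshot}| \;=\; t_j \cdot 2^j \cdot k + \sum_{i<j} t_i \cdot 2^i \cdot k \;\geq\; \ell_j .
\]

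For the second case (all elements of level $j$ are duplicates), the same correspondence forces level $j$ into the stale case, so $t_j = 0$, $m_j = 2$, and $\ell_j = k \cdot 2^{j+1}$. Using $t_i \le 2$ and the geometric-series sum,
\[
    n - |\mathit{snapshot}| \;=\; \sum_{i<j} t_i \cdot 2^i \cdot k \;\le\; 2k \sum_{i<j} 2^i \;=\; k \cdot 2^{j+1} - 2k \;<\; \ell_j .
\]
The main obstacle is the classification in the first paragraph: I must enumerate every DCAS pattern that can affect $\mathit{levels}[j]$ precisely enough to exclude a stale $k$-sized residue, and I must justify via the propagation chain that in the stale case level $j$'s elements are indeed represented by a canonical higher level already placed into $\mathit{snapshot}$ under the induction invariant. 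Once this structural fact is in hand, both inequalities reduce to elementary arithmetic.
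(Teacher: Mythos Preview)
Your classification of the possible pairs $(t_j, m_j)$ is the crux of the argument, and it is where the proof breaks. The DCAS-transition reasoning you give correctly describes the \emph{instantaneous} joint state of $\mathit{tritmap}[j]$ and $\mathit{levels}[j]$, but $t_j = \mathit{tm1}[j]$ and $m_j = \mathit{snapLevels}[j].\mathrm{size}()/k$ are read at \emph{different} moments, with arbitrarily many stream-size-preserving propagation DCASes in between. Pairs you exclude therefore do occur. Concretely, take $\mathit{tm1}$ with $t_0{=}2$, $t_1{=}2$, $t_2{=}1$, $t_3{=}0$ (stream size $10k$); the query reads $\mathit{snapLevels}[0]$ through $\mathit{snapLevels}[2]$ while nothing changes, then the owner sitting at level~$1$ merges into level~$2$ and propagates to level~$3$, after which the query reads $\mathit{snapLevels}[3]$ as a $k$-array and $\mathit{tm2}$ with trits $(2,0,0,1)$, also of stream size $10k$. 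Here $(t_3, m_3) = (0,1)$, contradicting ``$m_j = 1$ forces $t_j = 1$''. Moreover $(t_2, m_2) = (1,1)$ with $\mathit{snapLevels}[2]$ a \emph{duplicate}: its elements were carried into $\mathit{snapLevels}[3]$ (read later), which the construction places in $\mathit{snapshot}$ first. This directly refutes your Part-2 claim that duplicates force the stale case $m_j = 2$. The same run also falsifies your inductive invariant: after adding level~$3$ the snapshot has weight $8k$, whereas $\sum_{i>2} t_i \cdot 2^i k = t_3\cdot 8k = 0$.

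The paper's proof avoids these pitfalls by never tying the running snapshot weight to the $\mathit{tm1}$ digits level by level. Part~1 is a short contradiction from Lemma~\ref{Lem: snap_no_miss} and Lemma~\ref{Lem: all_dup}. Part~2 explicitly handles the $k$-sized duplicate case that your approach declares impossible: if a $k$-sized level~$j$ is already represented higher up, it can only have been carried there after a merge from level~$j{-}1$, so level~$j{-}1$ is \emph{also} a duplicate, and the remaining sub-stream is bounded by levels $0$ through $j{-}2$, i.e., at most $2k(2^{j-1}-1) < k\cdot 2^j$. Your argument needs this case analysis (or something equivalent) and cannot replace it by the $(t_j,m_j)$ classification.
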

\begin{proof}
We will prove the first part by contradiction. Assume by contradiction that the size of sub-stream left to represent is smaller than the size of the representation of level $j$ such that $|snapshot\cup level[j]| > |\mathnormal{A}|$. It follows from Lemma~\ref{Lem: snap_no_miss} that level $j$ contains at least one duplicated element (an element already represented by level $i$ in \emph{snapshot} such that $i>j$). From Lemma~\ref{Lem: all_dup} all level $j$'s elements are duplicated and already being represented by the current \emph{snapshot}. Contradiction. 

Now, for the second part, assume that an element $x$ is represented more than once by the current \emph{snapshot}. Let level $j$ be the second level to be added to \emph{snapshot} that also represents $x$ (first time to duplicate the representation of $x$). From Lemma~\ref{Lem: all_dup}, all elements in level $j$ are already represented by this snapshot.
If level $j$ contains $2k$ elements, the size of sub-stream represented by level $j$ is $2k\cdot2^j$, and the size of the sub-stream left to represent is at most represented by levels 0 to $j-1$, meaning the size is at most $2k(1+2+\dots+2^{j-1})=2k(2^j-1)$ which is smaller than the size of sub-stream represented by level $j$.
If level $j$ contains $k$ elements and is already represented by level $i$, $i>j$, in \emph{snapshot}, level $j-1$ must have propagated level $j$ deeper, i.e. level $j-1$ is also represented by level $i$, therefore, also already represented by \emph{snapshot}. Level $j$ represents $k\cdot2^j$. The size of the sub-stream left to represent is at most represented by levels 0 to $j-2$, meaning the size is at most $2k(1+2+\dots+2^{j-2})=k(2^{j-1}-2)$, which is smaller than the size of sub-stream represented by level $j$.
\end{proof}

\begin{lemma} \label{Lem: query_estimate}
The snapshot constructed in Algorithm~\ref{alg: sl_query}, Lines~\ref{line: query_estimate_start}-\ref{line: query_estimate_end}, summarizes the same stream, $\mathnormal{A}$, as represented by the second tritmap read, $tm2$, in Algorithm~\ref{alg: sl_query}, Line~\ref{Line: second-collect}. 
\end{lemma}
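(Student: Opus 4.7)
The plan is to establish two properties of the constructed \emph{snapshot}: (a) every element of the stream $\mathnormal{A}$ represented by $tm1$ is represented in \emph{snapshot}, and (b) no element of $\mathnormal{A}$ is represented more than once. Since both $tm1$ and $tm2$ describe streams of equal size (the loop in Algorithm~\ref{alg: sl_query} only exits when this holds), by the first auxiliary lemma they represent the same stream $\mathnormal{A}$, so it suffices to match the represented size exactly, level-by-level, while avoiding duplicates.

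The natural approach is induction on the index $i$ as the construction sweeps from MAX\_LEVEL down to $0$, maintaining the invariant that after considering level $i$, the partial snapshot (i) is a duplicate-free subset of the elements in $snapLevels$, (ii) has represented size $|snapshot| \leq |\mathnormal{A}|$, and (iii) contains every sub-stream portion contributed by $snapLevels[j]$ for $j \geq i$ that is not already represented by a higher added level. The inductive step splits into two cases driven by Lemma~\ref{Lem: all_in_no_dup}. Case A: level $i$ contributes only new elements. By the first part of Lemma~\ref{Lem: all_in_no_dup}, the remaining size $|\mathnormal{A}| - |snapshot|$ is at least $|snapLevels[i]| \cdot 2^i$, so the test at Line~\ref{Line:add_to_snapshot} passes and the level is appended, preserving the invariant. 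Case B: level $i$ consists entirely of elements already represented (by Lemma~\ref{Lem: all_dup} it is all-or-nothing). By the second part of Lemma~\ref{Lem: all_in_no_dup}, the remaining size is strictly less than $|snapLevels[i]| \cdot 2^i$, so the test fails and the level is correctly skipped without losing coverage, since those elements are already in \emph{snapshot}.

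For completeness, I would close the argument by appealing to Lemma~\ref{Lem: snap_no_miss}: every sampled element that summarizes $\mathnormal{A}$ appears somewhere in $snapLevels$, possibly at multiple levels due to concurrent propagations between the two tritmap reads. Combined with the invariant, every such element is represented by some level that was added to \emph{snapshot}, while any higher-level duplicate that would double-count it is filtered out by Case B. Hence, after the sweep terminates (either by loop exhaustion or the early break at Line~\ref{Line:snapshot_done}), we have $|snapshot| = |\mathnormal{A}|$ with each element represented exactly once, i.e., \emph{snapshot} summarizes $\mathnormal{A}$.

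The main obstacle I anticipate is not the high-level case split but the bookkeeping required to align the size-comparison test at Line~\ref{Line:add_to_snapshot} with the no-duplication invariant across the mixed population of $k$-sized (full) and $2k$-sized (in-propagation) levels that $snapLevels$ may contain. In particular, a level of size $2k$ whose elements have begun being merged into a higher level by a concurrent \emph{propagate} must be shown to fall cleanly into Case B, and a level of size $k$ whose propagation has not yet started must fall into Case A; both facts follow from Lemma~\ref{Lem: all_in_no_dup}, but the induction has to carry enough information about the weights $2^i$ and the exact residual size $|\mathnormal{A}| - |snapshot|$ to make the test sharp rather than merely sufficient. Phrasing the invariant directly in terms of this residual, as above, is what keeps the argument clean.
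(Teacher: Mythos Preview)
Your proposal is correct and follows essentially the same approach as the paper: use Lemma~\ref{Lem: all_in_no_dup} (together with Lemma~\ref{Lem: all_dup} for the all-or-nothing dichotomy and Lemma~\ref{Lem: snap_no_miss} for coverage) to show that the size test at Line~\ref{Line:add_to_snapshot} admits a level iff it contributes only new elements, so the sweep terminates with $|snapshot|=|\mathnormal{A}|$ and no duplicates. The paper compresses this to a single sentence invoking Lemma~\ref{Lem: all_in_no_dup}, while you spell out the implicit top-down induction and case split; one minor wording slip---it is the \emph{lower}-level copy that gets filtered in Case~B, not a ``higher-level duplicate''---does not affect the argument.
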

\begin{proof}
From Lemma~\ref{Lem: all_in_no_dup} $|snapshot|=|\mathnormal{A}|$ and every element is represented at most once in the constructed \emph{snapshot}.
\end{proof}


\newpage
\subsection{Correctness Proof}
\label{sup: correctness_proof}
Queries are answered from an array of ordered tuples summarizing the total stream processed so far and denoted as $\mathit{samples}$. Each tuple contains a summary point (i.e.,a value from the sketch) and its associated weight. The samples array contains all the sketch's summary points and is sorted according to their values. Note that, only levels with $\mathit{tritmap}[i]\in{1,2}$ are included.

As described in Section~\ref{Section: concurrent_algorithm}, the update operation is divided into 3 stages: 
(1) $\mathit{gather\ and\ sort}$ is the process of ingesting stream elements into a $\mathit{Gather\&Sort}$ unit.
(2) $\mathit{batch\ update}$ is the process of copying $2k$ elements from one of the \emph{G\&SBuffer}s into \mysketch's first level. 
(3) $\mathit{propagate\ levels}$ is the process of merging base level up the sketch's levels until reaching an empty level. 


Correctness of an object's implementation is defined with respect to a sequential specification $\mathcal{H}$. Sequential specification is defined with respect to deterministic objects. Therefore, we de-randomized the Quantiles sketch by providing coin flips with every update. We call the set of sequential histories of the deterministic Quantiles sketch as \emph{SeqSketch}.

Rinberg et al.~\cite{Rinberg_2020_fast_sketches} defined the relation between a sequential history and a stream:
\begin{definition} 
Given a finite sequential history H, $\mathcal{S}(H)$ is the stream $a_1,\dots,a_n$ such that $a_k$ is the argument of the $k^{th}$ update in H.
\end{definition}

The notion of \emph{happens before} in a sequential history as defined in \cite{Rinberg_2020_fast_sketches}:
\begin{definition}
Given a finite sequential history $\mathit{H}$ and two method invocation $M_1,M_2$ in $\mathit{H}$, if $M_1$ precedes $M_2$ in $\mathit{H}$, we denote $M_1 \prec_H M_2$.
\end{definition}

\begin{definition}[Unprop updates] \label{Def: unprop_update}
Given a finite execution \s of \mysketch, we denote by suffix(\s) as the suffix of \s starting at the last successful $\mathit{batch update}$ event, or the beginning of \s if no such event exists.
We denote by $\mathit{up\_suffix}(\sigma)$ the sub-sequence of $\mathit{H(suffix}(\sigma))$ consisting of updates operations in the $Gather\&Sort$ units.
We denote by $\mathit{up\_suffix_i}(\sigma)$ the sub-sequence of $\mathit{H(suffix}(\sigma))$ consisting of updates operations in the local buffer of thread $T_i$.
\end{definition}

\begin{definition}[Updates Number] \label{Def: updates_num}
We denote the number of updates in history $\mathit{H}$ as $\mathit{|H|}$.
\end{definition}


\begin{lemma} \label{Lem: sl_relaxation}
\mysketch is strongly linearizable with respect to the relaxed specification $SeqSketch^r$ with \(r=4kS + (N-S)b\), where $S$ is the number of NUMA nodes, $k$ is the sketch summary size, $b$ is the size of threads local buffer and $N$ is the number of update threads.
\end{lemma}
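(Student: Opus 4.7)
\textbf{My plan is} to exhibit an explicit prefix-preserving mapping $f$ from executions of \mysketch to sequential histories and to show that each $f(\sigma)$ lies in $SeqSketch^r$ for $r=4kS+(N-S)b$. I would linearize an \emph{update}($x$) at the append of $x$ to its thread's local buffer (Line~\ref{Line: update_linearization} of Algorithm~\ref{alg: gather}), and linearize a \emph{query}($\phi$) at the second $tritmap$ read of the successful iteration of its repeat loop (Line~\ref{Line:query_linearization} of Algorithm~\ref{alg: sl_query}), attaching to each \emph{update}($x$) the coin flips later consumed by \emph{sampleOddOrEven} on that element. Since these are concrete scheduled steps, and the weak adversary cannot observe the coin flips, the mapping is well-defined.

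\textbf{Prefix preservation is immediate:} when $\sigma$ extends to $\sigma'$ by appending steps, every linearization event of $\sigma$ still occurs in $\sigma'$ at the same position with the same operation, arguments and coin flips, so $f(\sigma)$ is a prefix of $f(\sigma')$.

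\textbf{For membership in $SeqSketch^r$,} I would fix a query $Q$ in $f(\sigma)$ with linearization time $t$, and let $U$ be the set of updates whose elements have already been absorbed into some level of the sketch by time $t$. By Lemma~\ref{Lem: snap_no_miss}, every element of $U$ appears in $Q$'s $\mathit{snapLevels}$, and by Lemma~\ref{Lem: query_estimate} the constructed snapshot summarizes exactly the stream $\mathnormal{A}$ represented by $Q$'s two matching tritmap reads. Hence $Q$'s returned value is a legal sequential answer for $\mathcal{S}(U)$ under \emph{SeqSketch} with the coin flips that $f$ attaches, and the claim reduces to bounding the number of updates linearized before $Q$ in $f(\sigma)$ but absent from $U$ by $r$.

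\textbf{The main obstacle is} bounding these in-flight updates. Each update linearized before $Q$ but absent from $U$ is either (i) still in its thread's local buffer, or (ii) already copied into a G\&SBuffer slot but not yet propagated to $\mathit{levels}[0]$ by a successful DCAS at Line~\ref{Line:insert_batch}. Category (ii) contributes at most two full G\&SBuffers per NUMA node, summing to $4kS$. Category (i) naively contributes $Nb$, but a tighter invariant gives $(N-S)b$: the thread that most recently completed its final write into a G\&SBuffer has just emptied its local buffer at Line~\ref{Line: copy_local_buffer}, so at most $N-S$ threads can concurrently hold non-empty local buffers. Establishing this invariant across multiple NUMA nodes, each with its own owner, is the delicate step, requiring one to show that the owner's local-buffer reset and its last G\&SBuffer write are interlocked so that the two terms cannot both be charged to the same thread. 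Holes from Section~\ref{ssec:holes-analysis} do not disrupt the argument: $f$ linearizes every update regardless of whether its slot is later overwritten or duplicated, so holes only modify the sub-stream $\mathcal{S}(f(\sigma))$ fed to the sketch, which remains a legal input to \emph{SeqSketch} under the attached coin flips.
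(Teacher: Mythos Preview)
Your approach is essentially the paper's---same linearization points for updates and queries, same counting of in-flight elements---but with one structural gap. The paper uses \emph{two} mappings: your $f$ is the paper's $l$, and the paper additionally defines a second mapping (confusingly also called $f$ there) that places each update at its \emph{visibility point}, namely the successful DCAS in \texttt{batchUpdate} that carries its G\&SBuffer into $\mathit{levels}[0]$. Membership in $SeqSketch^r$ requires, by Definition~\ref{Def: relaxation}, a \emph{single} witness history $H\in SeqSketch$ that is an $r$-relaxation of $l(\sigma)$; your per-query argument produces, for each query $Q$, a set $U_Q$ explaining $Q$'s return value, but never stitches these sets into one sequential history. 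The paper's visibility-point ordering is precisely that witness: it totally orders all updates by absorption time, so that the updates preceding any $Q$ in the witness are exactly your $U_Q$, and then proves separately that this witness lies in $SeqSketch$ (Invariant~\ref{Inv: summary_history}, Lemma~\ref{Lem: query_correctness}) and that it is an $r$-relaxation of $l(\sigma)$ (Lemmas~\ref{Lem: invocations_bound} and~\ref{Lem: relaxation}). Your plan becomes complete once you make this second mapping explicit; without it, a reader cannot see why the various $U_Q$'s are mutually consistent.

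A smaller point: your remark that holes ``only modify the sub-stream $\mathcal{S}(f(\sigma))$ fed to the sketch'' is backwards. Holes do not change $\mathcal{S}(f(\sigma))$---every update is still linearized at its local-buffer append with its original argument---they change what the owner \emph{actually reads} at Line~\ref{Line:create_copy}. So in the presence of holes, a query's return value is correct for the holed stream, not for $\mathcal{S}(U)$, and the witness history is strictly speaking not in $SeqSketch$. The paper does not resolve this either; it explicitly disregards holes in the correctness argument (Section~\ref{ssec:error-analysis}) and treats them as a separate, negligible error source.
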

\begin{proof}
\mysketch is an $r$-relaxed concurrent Quantiles sketch. The correctness condition for randomized algorithms under concurrency is strong linearizability~\cite{strong_linearizability}. Strong linearizability is defined with respect to the sequential specification of a data structure. We denote by $\mathit{SeqSpec}$ the sequential specification of \mysketch.

A relaxed consistency extend the sequential specification of an object to a larger set that contains sequential histories which are not legal but are at bounded "distance" from a legal sequential history~\cite{Henzinger_2013_Quantitative_Relaxation,Afek_2010_Quasi_linearizability,Rinberg_2020_fast_sketches}. We convert \mysketch into a deterministic object by providing a coin flip with every update. We re-define (de-randomized) \mysketch sequential specification by relaxing it. Intuitively, we allow a query to "miss" a bounded number of updates that precede it. Quantiles sketch is order agnostic, thus re-ordering updates is also allowed. 

Let $\sigma$ be a concurrent execution of \mysketch. We use two mappings from concurrent executions to sequential histories defined as follows.
We define a mapping, $l$, from a concurrent execution to a serialization, by ordering operations according to the following linearization points:
\begin{itemize}
\item \textbf{Query} linearization point is the second \emph{tritmap} read, $tm2$, such that it summarizes the same stream size as $tm1$ (Algorithm~\ref{alg: sl_query}, Line~\ref{Line:query_linearization}).
\item \textbf{Update} linearization point is the insertion to threads local buffers (Algorithm \ref{alg: gather}, Line \ref{Line: update_linearization}).
\end{itemize}

Strong linearizability requires that the linearization of a prefix of a concurrent execution is a prefix of the linearization of the whole execution. By definition, $l(\sigma)$ is prefix-preserving. Note that $l(\sigma)$ is a serialization that does not necessarily meets the sequential specification.

Relaxed consistency extends the sequential specification of an object to include also relaxed histories.
We define a mapping, $f$, from a concurrent execution to a serialization, by ordering operations according to visibility points:
\begin{itemize}
\item \textbf{Query} visibility point is its query's linearization point.
\item \textbf{Update} visibility point is the time after its invocation in $\sigma$ such that the \emph{G\&SBuffer} (this update is inserted into) is batched updated into level 0 with DCAS. If there is not such time, then this update does not have a visibility point, meaning, it is not included in the relaxed history, $f(\sigma)$.
\end{itemize}

To prove correctness we need to show that for every execution $\sigma$ of \mysketch: (1) $f(\sigma) \in \mathit{SeqSpec}$, and (2) $f(\sigma)$ is an $r$-relaxation of $l(\sigma)$ for $r=4kS + (N-S)b$.

We show the first part. 
\begin{lemma} \label{Lem: visibility_in_seq_spec}
Given a finite execution $\sigma$ of \mysketch, $f(\sigma)$ is in the sequential specification. 
\end{lemma}

\begin{proof}
First, we present and prove some invariants. 

\begin{invariant}\label{Inv: GSBuffer_summary}
The Gather\&Sort object summarises at most 4k elements.
\end{invariant}
\begin{proof}
The Gather\&Sort unit contains two buffers of \(2k\) elements. Elements are ingested into the buffer without a sampling process. The desired summary is agnostic to the processing order, therefore \(\mathnormal{S}\) summarises history of \(4k\) update operations and their responses. 
\end{proof}

\begin{lemma}
The variable tritmap is a monotonic increasing integer.
\end{lemma}
\begin{proof}
The variable tritmap is altered only in Line~\ref{Line:insert_batch} of Algorithm~\ref{alg: batch_update}, in Line~\ref{Line:next_full_DCAS} of Algorithm~\ref{alg: propagate} and in Line~\ref{Line:next_empty_DCAS} of Algorithm~\ref{alg: propagate}. By definition, it is only incremented.
\end{proof}


\begin{invariant} \label{Inv: tritmap_sketch_state}
The variable tritmap represents the sketch state:
\begin{itemize}
    \item If \(tritmap[i] = 0\), then \(levels[i]\) is empty or does not contained in the sketch's samples array.
    \item If \(tritmap[i] = 1\), then \(levels[i]\) contains \(k\) points associated with a weight of \(2^i\).
    \item If \(tritmap[i] = 2\), then \(levels[i]\) contains \(2k\) points associated with a weight of \(2^i\).
\end{itemize}
\end{invariant}
\begin{proof}
The proof is by induction on the length of levels array (or the current maximum depth of levels[]).\\
\underline{Base}: By definition, tritmap is initialized to 0 and updated only at the batchUpdate procedure and the propagate procedure.
After the first batch update, level 0 contains $2k$ elements and tritmap is increased by 2 such that tritmap[0]=2. When this first batch is merged with the next level, level 1 contains $k$ elements and tritmap is increased by 1 such that tritmap[0]=0.
On each propagation, we first perform a batch update of one of the G\&SBuffer arrays to level 0 and increase tritmap by 2. Then we call propagate() starting with level 0. Level 0 is merged with the next level and tritmap is incremented by 1. Therefore, after each batchUpdate, \(tritmap[0] = 2\) and level 0 contains \(2k\) elements and after each call to propagate(0) \(tritmap[0] = 0\) and level 0] is not contained in the sketch's samples array. The following calls to propagate increase tritmap by $3^i$ for $i>0$ and \(tritmap[0] = 0\) until the end of the current propagation. \\
\underline{Inductive hypothesis}: We assume the invariant holds for all levels i such that \(i>0\) and prove it holds for level \(i+1\). By definition, tritmap is updated only at batchUpdate and propagate procedures. For \(i>0\), \(tritmap\) is changed only if \(tritmap[i]=2\). By the inductive hypothesis, if \(tritmap[i] = 2\), then \(levels[i]\) contains \(2k\) points associated with a weight of \(2^i\). If propagation has not yet reached level i+1, it is empty and \(tritmap[i+1]=0\) from initialization. After a call to propagate(i), \(levels[i+1]\) contains k points associated with a weight of \(2^{i+1}\) and tritmap satisfies \([b_{31},\dots,b_{i+2},0,2,b_{i-1},\dots,b_0] + 3^i = [b_{31},\dots,b_{i+2},1,0,b_{i-1},\dots,b_0]\) i.e \(tritmap[i+1]=1\). Next time propagation will reach level i+1, it will contain \(2k\) points associated with a weight of \(2^{i+1}\) and tritmap will satisfy \([b_{31},\dots,b_{i+2},1,2,b_{i-1},\dots,b_0] + 3^i = [b_{31},\dots,b_{i+2},2,0,b_{i-1},\dots,b_0]\) i.e \(tritmap[i+1]=2\). Note that each propagation starts from level 0 and stops when reaching an empty level $j$, the tritmap trit larger then $j$ are not changed.
\end{proof}

\begin{invariant}\label{Inv: summary_history}
Given a finite execution $\sigma$ of \mysketch, it summarises \(f(\sigma)\).
\end{invariant}
\begin{proof}
The proof is by induction on the length of \(\sigma\).\\
\underline{Base}: The base is immediate. \(\mathnormal{S}\) summarises the empty history.\\
\underline{Inductive hypothesis}: We assume the invariant holds for \(\sigma'\), and prove it holds for \(\sigma = \{\sigma',step\}\). We consider only steps that can alter the invariant, meaning steps that can change the sketch state.
\begin{itemize}
    \item DCAS operation in batchUpdate, increasing tritmap by 2 and copying one of the G\&SBuffer arrays into the first level of \mysketch.
    \item[] By the inductive hypothesis, before the step, \mysketch summarises \(f(\sigma')\). If the DCAS fails, the sketch state has not change. Else, $2k$ elements were copied to the level 0 and tritmap was increased by 2. 
    From Invariant \ref{Inv: GSBuffer_summary}, a G\&SBuffer array summarises a collection of \(2k\) updated elements \(\{a_1,\dots,a_{2k}\}\). By copying, we sequentially ingest the stream \(B=\{a_1,\dots,a_{2k}\}\) to \mysketch. Let \(A=\mathcal{S}(f(\sigma'))\). By definition, \mysketch summarises \(A||B\). Therefore \mysketch summarises \(f(\sigma)\), preserving the invariant. 
    \item DCAS operations in propagate, updating tritmap and merging level \(i\) with its following level.
    \item[] By the inductive hypothesis, before the step, \mysketch summarises \(f(\sigma')\). If the DCAS fails, the sketch state has not changed by the step. Else, we propagated level \(i\) into level \(i+1\). By definition, \(k\) points from level \(i\) were merged with level \(i+1\), with the weight of each point scaled up by a factor 2. tritmap[i]=0 and therefore level \(i\) was disabled from \(samples[]\) such that \(2k\) points associated with \(2^i\) weight are not included in the summary. The total weight of the summary points was not changed. The sketch's state summarises the same stream, no new points were added and the stream size was not changed. \mysketch summarises \(f(\sigma)\), preserving the invariant. 
    \item Operations to clear level i, updating \(levels[i] \gets \bot\).
    \item[] By definition, \(tritmap[i] = 0\). By Invariant \ref{Inv: tritmap_sketch_state}, \(levels[i]\) is empty or not included in \(samples[]\), meaning clearLevel does not affect the summary points.  \(\mathnormal{S}\) summarises \(f(\sigma)\), preserving the invariant. 
\end{itemize}
\end{proof}

\begin{lemma}[Query Correctness] \label{Lem: query_correctness}
Given a finite execution \s of \mysketch, let Q be a query that returns in \s. Let \v be the visibility point of \Q, and let $\sigma'$ be the prefix of \s until point \v. Q returns a value equal to the value returned 
by a sequential sketch after processing $\mathcal{S}\left(f(\sigma')\right)$.

\end{lemma}
\begin{proof}
Let \s be an execution of \mysketch, let \Q be a query that returns in \s, and let \v be the visibility point of \Q. Let \s' be the prefix of \s until point \v, and let \(A=\mathcal{S}(f(\sigma'))\).
By definition, the visibility point of a query is when the second tritmap read returns a value representing the same stream size as the previous tritmap read. As proved in Lemma~\ref{Lem: query_estimate}, the collected state represents the same stream as \mysketch at the visibility point. By Invariant \ref{Inv: summary_history}, at point \v, \mysketch summarises \fstag, and, similarly, summarises the stream \A. 
Therefore, \(query(arg)\) returns a value equal to the value returned by a sequential sketch after processing \(A=\mathcal{S}(f(\sigma'))\).
\end{proof}

We have shown that each query in \fs estimates all updates that happened before its invocation. Specifically, a query invocation at the end of a finite execution \s,
returns a value equal to the value returned by a sequential sketch after processing
\A. By this, we have proven that $ f(\sigma) \in SeqSpec$.
\end{proof}

We now show that for every execution \s, \fs  is an \(r\)-relaxation of \(l(\sigma)\) for \(r = 4kS + (N-S)b\).
The order between operations satisfies:

\begin{lemma}\label{Lem: op_order}
Given a finite execution \s of \mysketch, and given an operation O (query or update) in l(\s), for every query Q in l(\s) such that Q happened before O in \(l(\sigma)\), then Q happened before O in \(f(\sigma)\):  \[Q \prec_{l(\sigma)} O \Rightarrow  Q \prec_{f(\sigma)} O\]
\end{lemma}
\begin{proof}
If O is a query then the proof is immediate since the visibility point and the linearization point of query are equal. Else, O is an update. By definitions, the linearization point of update happens before its visibility point. As the linearization point and visibility point of query \Q are equal, it follows that if \(Q \prec_{l(\sigma)} O\) then \(Q \prec_{f(\sigma)} O\).
\end{proof}

Note that as query linearisation point is equal to its visibility point, all queries in \fs will also be in \ls. 

\begin{lemma}\label{Lem: GSBuffer_updates_num}
Given a finite execution \s of \mysketch, the maximum number of unpropagated updates operations in Gather\&Sort units is \(S \cdot 4k\): \[|up\_suffix(\sigma)| \leq S \cdot 4k\], where S is the number of NUMA nodes
\end{lemma}
\begin{proof}
If update operation is included in \(up\_suffix(\sigma)\), the size of the array in G\&SBuffer that the update is a member of, is less-equal $2k$. By definition, if both arrays in a Gather\&Sort unit are full, no update thread (pinned to the same node as the Gather\&Sort unit) can copy his local buffer's elements. It follows that \(|up\_suffix(\sigma)| \leq S\cdot4k\).
\end{proof}

We give an upper bound on the number of updates in a threads local buffers.

\begin{lemma}\label{Lem: local_updates_num}
Given a finite execution \s of \mysketch, the number of unpropagated updates in the local buffer of thread \(T_i\) is bounded by b, \[|up\_suffix_i(\sigma)| \leq b\]
\end{lemma}
\begin{proof}
If update is included in \(up\_suffix_i(\sigma)\) , it follows that \(|items_buf_i| \leq b\) and therefore \(|up\_suffix_i(\sigma)| \leq |items_buf_i| \leq b \). When the local buffer of thread \(T_i\) is full, it copies \(items_buf_i\) to one of the G\&SBuffer's arrays and the corresponding updates will not be included in \(up\_suffix_i(\sigma)\).
\end{proof}

To prove that \fs is an \((4kS + (N-S)b)\)-relaxation of \ls, first, we will show that \fs comprised of all but at most \(r=4kS + (N-S)b\) invocations in \ls and their responses.

\begin{lemma} \label{Lem: invocations_bound}
Given a finite execution \s of \mysketch, \[ |f(\sigma)| \ge |l(\sigma)| - (4kS + (N-S)b) \]
\end{lemma}
\begin{proof}
\ls contains all updates, \fs contains all updates with visibility points. Updates without visibility points are the unpropagated updates in G\&SBuffers and unpropagated updates in the local buffer of each update thread. There are \(N\) update threads, therefore, excluding S update thread that may continue, \(|f(\sigma)| = |l(\sigma)| - (\sum_{i=1}^{N-S}|up\_suffix_i(\sigma)|) - |up\_suffix(\sigma)|\). From Lemma \ref{Lem: local_updates_num}, \(|up\_suffix_i(\sigma)| \leq b\) and from Lemma \ref{Lem: GSBuffer_updates_num}, \(|up\_suffix(\sigma)| \leq S \cdot 4k\). Therefore, \(|f(\sigma)| \ge |l(\sigma)| - (4kS + (N-S)b)\).
\end{proof}

To complete the poof that \fs is an \((4kS + (N-S)b)\)-relaxation of \ls, we will show that each invocation in \fs is preceded by all but at most \((4kS + (N-S)b)\) of the invocations that precede the same invocation in \ls.

\begin{lemma}\label{Lem: relaxation}
Given a finite execution \s of \mysketch, \fs is an \((4kS + (N-S)b)\)-relaxation of \ls.
\end{lemma}
\begin{proof}
Let O be an operation in \fs such that O is also in \ls. Let \(Ops\) be a collection of operations preceded O in \ls but not preceded O in \fs, i.e \( Ops=\{O'| O' \prec_{l(\sigma)} O \wedge  O' \nprec_{f(\sigma)} O \} \). By Lemma \ref{Lem: op_order}, query \( Q \notin Ops \). Let \(\sigma^{pre}\) be the prefix of \s and let \(\sigma^{post}\) be the suffix of \s such that \( l(\sigma)=\sigma^{pre},O,\sigma^{post} \). From Lemma \ref{Lem: local_updates_num}, \(|f(\sigma^{pre})| \ge |l(\sigma^{post})| - (4kS + (N-S)b))\). As \(|f(\sigma^{pre})|\) is the number of updates preceded O in \(f(\sigma^{pre})\), and \(|l(\sigma^{pre})|\) is the number of updates preceded O \(l\sigma^{pre})\), it follows that \(|Ops| = |l(\sigma^{pre})|-|f(\sigma^{pre})| \leq |l(\sigma^{pre})|-(|l(\sigma^{post})| - (4kS + (N-S)b))\leq (4kS + (N-S)b)\). Therefore, by Definition \ref{Def: relaxation}, \fs is an \((4k+b(N-1))\)-relaxation of \ls.
\end{proof}

Finally, we have proven that given a finite execution \s of \emph{Quancurrent}, \ls is strongly linearizable, \(f(\sigma) \in SeqSpec\) and \fs is an \((4kS + (N-S)b)\)-relaxation of \ls. We have proven Lemma \ref{Lem: sl_relaxation}.

\end{proof}

\newpage
\subsection{Hole Analysis Proofs}
\label{sup: holes-analysis}
To show the bound on the expected number of holes, we first show the following claim.
\begin{claim}
$\left(\frac{1}{2}\right)^{jb + 2i +1} {{jb+2i} \choose i}$ is monotonically increasing for $i \in \{0,1,\dots,b-1\}$ for $j \in \mathds{N}.$
\label{clm:supp-helper-claim-monotonic}
\end{claim}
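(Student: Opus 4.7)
The plan is to compute the ratio of consecutive terms and verify it is at least $1$. Set $f(i) \triangleq \left(\tfrac{1}{2}\right)^{jb+2i+1}\binom{jb+2i}{i}$. To prove monotonicity, it suffices to show $f(i+1)/f(i) \geq 1$ for every $i \in \{0,1,\dots,b-2\}$ (the case $b=1$ is trivial since the domain is a singleton).

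First I would compute
\[
\frac{f(i+1)}{f(i)} \;=\; \frac{1}{4}\cdot\frac{\binom{jb+2i+2}{i+1}}{\binom{jb+2i}{i}}.
\]
Using the identity $\binom{jb+2i+2}{i+1} = \binom{jb+2i}{i}\cdot \frac{(jb+2i+2)(jb+2i+1)}{(i+1)(jb+i+1)}$ (which is just factorial bookkeeping), the ratio simplifies to
\[
\frac{f(i+1)}{f(i)} \;=\; \frac{(jb+2i+2)(jb+2i+1)}{4(i+1)(jb+i+1)}.
\]

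Next I would show this is $\geq 1$, i.e. $(jb+2i+2)(jb+2i+1) \geq 4(i+1)(jb+i+1)$. Expanding and cancelling yields the equivalent inequality $(jb)^2 - jb \geq 2(i+1)$, i.e. $jb(jb-1) \geq 2(i+1)$. Since $i \leq b-2$, it is enough to prove $jb(jb-1) \geq 2(b-1)$. I would do so by observing that, for $j \geq 1$, $jb(jb-1) = j^2 b^2 - jb \geq j^2 b^2 - j^2 b = j^2 b(b-1) \geq b(b-1) \geq 2(b-1)$, where the last step uses $b \geq 2$.

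The main obstacle is minor but worth flagging: the edge case $j=1,\,b=2$ (and more generally the smallest allowed parameters) gives equality in the final chain of inequalities, so the monotonicity is non-strict. This is consistent with the statement, which only requires ``monotonically increasing'' in the weak sense (and is in any case all that is needed to apply the bound in the use of $\pi_{i,j}$ at the largest $i=b-1$). Once the algebra above is written out, the claim follows directly.
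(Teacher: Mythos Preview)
Your proof is correct and follows the same overall strategy as the paper: compute the ratio $f(i+1)/f(i)$, simplify it to $\frac{(jb+2i+2)(jb+2i+1)}{4(i+1)(jb+i+1)}$, and show this is at least $1$. Where you diverge is in the verification of this last inequality: the paper argues that the ratio, viewed as a function of $i$, is decreasing (stated without proof), evaluates it at the extreme $i=b-2$, and then checks the resulting expression as a function of $j$; you instead expand directly to the equivalent inequality $jb(jb-1)\geq 2(i+1)$ and bound it in one line. Your route is more elementary and self-contained, and it sidesteps both the unproven monotonicity-in-$i$ claim and the paper's small arithmetic slip (the asserted bound $h(1)\geq 9/8$ fails for $b=3$, though $h(1)\geq 1$ still holds, which is all that is needed).
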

\begin{proof}
Denote by $f(i)$:
\[\left(\frac{1}{2}\right)^{jb + 2i +1} {{jb+2i} \choose i}.\]

For $b=1$ the claim is immediate.
For $b=2$:
\[f(1)=\left(\frac{1}{2}\right)^{2j+3} {{2j+2} \choose 1} = f(0) \cdot \frac{2j+2}{4} \geq f(0) \cdot \frac{2+2}{4} = f(0).\]

For $b>2$, we show that $f(i+1)\geq f(i)$ for all $0 \leq i \leq b-2$.

\begin{align*}
    f(i+1) &= \left(\frac{1}{2}\right)^{jb + 2i +3} {{jb+2i+2} \choose {i+1}} \\
    &= \left(\frac{1}{2}\right)^{jb + 2i +3} \cdot \frac{(jb+2i+2)!}{(jb+i+1)!\cdot(i+1)!} \\
    &= f(i) \cdot \frac{1}{4} \cdot \frac{(jb+2i+2)(jb+2i+1)}{(jb+i+1)(i+1)}
\end{align*}
Consider the following function:
\[g(x) = \frac{1}{4} \cdot \frac{(jb+2x+2)(jb+2x+1)}{(jb+x+1)(x+1)}.\]
It is monotonically decreasing for $0 \leq x \leq b-1$. Consider $g(b-2)$:
\[\frac{1}{4} \cdot \frac{(jb+2b-2)(jb+2b-3)}{(jb+b-1)(b-1)}.\]
Denote this function as $h(j)$.
\[h(1) = \frac{1}{4} \cdot \frac{(3b-2)(3b-3)}{(2b-1)(b-1)} \geq 9/8\]
Furthermore, $h(j)$ is a monotonically increasing function, therefore $h(j) \geq 1$ for all $j \geq 1$.

Therefore:
\[f(i) \cdot \frac{1}{4} \cdot \frac{(jb+2i+2)(jb+2i+1)}{(jb+i+1)(i+1)} \geq f(i).\]
\end{proof}

Using Claim~\ref{clm:supp-helper-claim-monotonic} we have shown that:
\[E[H_j] \leq  b \cdot b \cdot {(j+2)b - 2 \choose b-1} \left(\frac{1}{2}\right)^{(j+2)b - 1}.\]

We first show that $E[H_1] \leq 1.4$ for all $b$.
\begin{lemma}
$E[H_1] \leq 1.4$ for all $b \in \mathds{N}$.
\label{clm:holes-math-help-1}
\end{lemma}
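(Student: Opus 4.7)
The plan is to bound the explicit upper estimate on $E[H_1]$ that the preceding derivation already produces, namely
\[
g(b) \;:=\; b^2 \binom{3b-2}{b-1}\left(\tfrac{1}{2}\right)^{3b-1},
\]
and to show $g(b) \leq 1.4$ uniformly in $b \in \mathds{N}$. Since $g$ is just a sequence in a single discrete parameter, the natural strategy is to split $\mathds{N}$ into a small initial segment, handled by direct arithmetic, and an infinite tail, handled by monotonic decay.

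For the initial segment, I would tabulate $g(b)$ explicitly for $b$ up to a modest threshold $B_0$ (around $B_0 = 10$ suffices). These are closed-form rationals starting at $g(1) = 1/4$, rising slowly, and peaking near $b = 9$ at a value strictly below $1.31$, which is in particular below $1.4$.

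For the tail, the key step is the ratio computation
\[
\frac{g(b+1)}{g(b)} \;=\; \frac{3(b+1)^2 (9b^2 - 1)}{16\, b^3 (2b+1)},
\]
which follows from the elementary binomial identity $\binom{3b+1}{b}/\binom{3b-2}{b-1} = 3(9b^2-1)/\bigl(2b(2b+1)\bigr)$ together with the factor of $(1/2)^3$ contributed by the exponential part. Clearing denominators, the inequality $g(b+1) \leq g(b)$ is equivalent to the quartic condition $5 b^4 - 38 b^3 - 24 b^2 + 6b + 3 \geq 0$, which is immediate to check at $b = B_0$ and extends to every larger $b$ by observing that once $b \geq 10$ we have $5 b^4 - 38 b^3 = b^3(5b - 38) \geq 12 b^3$, which comfortably dominates the residual $24 b^2$ and lower-order terms.

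Putting the two regimes together, the maximum of $g$ over $\mathds{N}$ is attained inside the finite segment $\{1,\ldots,B_0\}$ and is bounded by $1.31 < 1.4$, establishing the claim. The main obstacle is purely algebraic bookkeeping around the binomial ratio; once the ratio is simplified into the quartic form above, the remaining inequalities are routine. An asymptotic route via Stirling's formula would instead give $g(b) \sim c\, b^{3/2}(27/32)^b \to 0$, but that requires carefully tracked error constants to yield a uniform bound, so the direct discrete approach is preferable.
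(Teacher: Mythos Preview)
Your proposal is correct and follows essentially the same approach as the paper: both analyze the explicit bound $g(b)=b^2\binom{3b-2}{b-1}(1/2)^{3b-1}$, compute the ratio $g(b+1)/g(b)$ to establish eventual monotone decay, and verify the initial segment by direct evaluation, finding the maximum at $b=9$ with value about $1.305$. The only difference is cosmetic---you reduce the ratio condition to the quartic $5b^4-38b^3-24b^2+6b+3\ge 0$ (giving threshold $B_0=10$), whereas the paper bounds the individual factors $\frac{3b+1}{2b+1}\le\frac{3}{2}$ and $\frac{3b-1}{b}\le 3$ to obtain $g(b+1)/g(b)\le\left(\frac{b+1}{b}\right)^2\cdot\frac{6.75}{8}$, which yields threshold $B_0=12$.
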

\begin{proof}
Denote by $f(b)$ the value of $E[H_1]$ for parameter $b \in \mathds{N}$. We first show that $\frac{f(b+1)}{f(b)} < 1$ for all $b\geq 12$.

\begin{align*}
\frac{f(b+1)}{f(b)} &= \frac{(b+1)^2 \cdot {3b+1 \choose b} \cdot 0.5^{3b+2}}{b^2 \cdot {3b-2 \choose b-1} \cdot 0.5^{3b-1}} \\
&=\left(\frac{b+1}{b}\right)^2 \cdot \frac{3b+1}{2b+1} \cdot \frac{3b}{2b} \cdot \frac{3b-1}{b} \cdot 0.5^3
\end{align*}
Note that $\frac{3b+1}{2b+1}$ is monotonically increasing for $b \geq 1$, and is bounded by $3/2$. Furthermore, $\frac{3b-1}{b}$ is also monotonically increasing for $b \geq 1$, and is bounded by $3$. Therefore,
\begin{align*}
\left(\frac{b+1}{b}\right)^2 \cdot \frac{3b+1}{2b+1} \cdot \frac{3b}{2b} \cdot \frac{3b-1}{b} \cdot 0.5^3 \\
&\leq \left(\frac{b+1}{b}\right)^2 \cdot \frac{6.75}{8}
\end{align*}
Finally, note that $\left(\frac{b+1}{b}\right)^2$ is a monotonically decreasing series for $b \geq 1$. For $b=12$:
\[\left(\frac{12+1}{12}\right)^2 \cdot \frac{6.75}{8} < 1.\]

Therefore, $f(b+1) \leq f(b) < 1$ for all $b \geq 12$. Lastly:
\[\max_{1\leq b \leq 12}\{f(b)\} = f(9)=1.305 < 1.4.\]

Therefore $E[H_1] < 1.4$, as required.
\end{proof}

Next, we show that $E[H_{j+1}] \leq 0.5 \cdot E[H_j]$.
\begin{lemma}
$E[H_{j+1}] \leq 0.5 \cdot E[H_j]$ for all $b \in \mathds{N}$ and $j \geq 1$.
\label{clm:holes-math-help-2}
\end{lemma}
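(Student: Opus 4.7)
The plan is to prove a stronger pointwise inequality: for every $i \in \{0, 1, \ldots, b-1\}$ and every $j \geq 1$, the upper-bound terms satisfy $\pi_{i, j+1} \leq \tfrac{1}{2}\,\pi_{i,j}$. Summing over $i$ and invoking the upper bound $E[H_j] \leq b \sum_{i=0}^{b-1} \pi_{i,j}$ established in Section~\ref{ssec:holes-analysis} then immediately yields $E[H_{j+1}] \leq \tfrac{1}{2}\,E[H_j]$.

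The first step is to expand the ratio using the definition $\pi_{i,j} = \left(\tfrac{1}{2}\right)^{jb+2i+1}\binom{jb+2i}{i}$, which gives
\[\frac{\pi_{i,j+1}}{\pi_{i,j}} = \left(\frac{1}{2}\right)^b \cdot \frac{\binom{(j+1)b+2i}{i}}{\binom{jb+2i}{i}} = \left(\frac{1}{2}\right)^b \prod_{m=1}^{b}\frac{jb+2i+m}{jb+i+m}.\]
Each factor in the product can be written as $1 + \tfrac{i}{jb+i+m}$, and for $j \geq 1$, $i \leq b-1$, and $m \geq 1$ we have $jb+i+m \geq b+1$, so each factor is at most $1+\tfrac{b-1}{b+1} = \tfrac{2b}{b+1}$. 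Substituting this termwise bound:
\[\frac{\pi_{i,j+1}}{\pi_{i,j}} \leq \left(\frac{1}{2}\right)^b \cdot \left(\frac{2b}{b+1}\right)^b = \left(\frac{b}{b+1}\right)^b = \frac{1}{(1+1/b)^b}.\]

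The main (and essentially the only non-routine) step is to bound this last expression by $1/2$. This follows from the classical fact that $(1+1/b)^b$ is monotonically increasing in $b \in \mathds{N}$, with $(1+1/1)^1 = 2$ and $\lim_{b\to\infty}(1+1/b)^b = e$. Hence $(1+1/b)^b \geq 2$ for every $b \in \mathds{N}$, so the ratio is at most $1/2$. I expect this monotonicity step, and in particular the tightness at $b=1$, to be the one place where the argument is not purely mechanical; everything else is algebraic manipulation of factorials. Summing $\pi_{i,j+1} \leq \tfrac{1}{2}\pi_{i,j}$ over $i \in \{0,\ldots,b-1\}$ and multiplying by $b$ completes the proof.
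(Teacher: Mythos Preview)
Your argument is correct and follows a genuinely different route from the paper. The paper first uses the monotonicity lemma (Claim~\ref{clm:supp-helper-claim-monotonic}) to replace $\sum_{i=0}^{b-1}\pi_{i,j}$ by $b\cdot\pi_{b-1,j}$, obtaining the closed form $E[H_j]\le b^{2}\binom{(j+2)b-2}{b-1}(1/2)^{(j+2)b-1}$, and then compares these closed forms for $j$ and $j+1$ via a telescoping product of binomial ratios. You bypass the monotonicity lemma entirely by comparing termwise: showing $\pi_{i,j+1}\le\tfrac12\,\pi_{i,j}$ for every $i$ reduces the whole thing to the single classical inequality $(1+1/b)^{b}\ge 2$, which is shorter and more self-contained. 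One small caveat, which applies equally to the paper's own proof: what is literally established is that the \emph{upper bound} $U_j=b\sum_{i}\pi_{i,j}$ satisfies $U_{j+1}\le\tfrac12\,U_j$; from $E[H_j]\le U_j$ alone one cannot deduce $E[H_{j+1}]\le\tfrac12\,E[H_j]$ as stated. This is harmless for the intended application, since the final conclusion $E[H]\le\sum_j U_j\le 2U_1\le 2.8$ only requires the geometric decay of $U_j$.
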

\begin{proof}
\begin{align*}
E[H_{j+1}] &= b^2 \cdot {(j+3)b - 2 \choose b-1} \left(\frac{1}{2}\right)^{(j+3)b - 1} \\
&= \frac{1}{2} b^2 \cdot \left(\frac{1}{2}\right)^{(j+2)b - 1} \cdot \left(\frac{1}{2}\right)^{b - 1} \cdot {(j+3)b - 2 \choose b-1}
\end{align*}
We next show that:
\[\left(\frac{1}{2}\right)^{b - 1} \cdot {(j+3)b - 2 \choose b-1} \leq {(j+2)b - 2 \choose b-1},\]
which completes the proof.
\begin{align*}
&\left(\frac{1}{2}\right)^{b - 1} \cdot {(j+3)b - 2 \choose b-1} \\
&=\left(\frac{1}{2}\right)^{b - 1} \cdot \frac{((j+3)b - 2)!}{((j+2)b - 1)!\cdot(b-1)!} \\
&= \left(\frac{1}{2}\right)^{b - 1} \cdot \frac{((j+2)b - 2+b)!}{((j+1)b - 1+b)!\cdot(b-1)!} \\
&= {(j+2)b - 2 \choose b-1} \cdot \left(\frac{1}{2}\right)^{b - 1} \cdot \prod_{k=1}^b \frac{(j+2)b-2+k}{(j+1)b-1+k} \\
&= {(j+2)b - 2 \choose b-1} \cdot \prod_{k=1}^b \frac{(j+2)b-2+k}{(j+2)b-2+k+jb+k} \\ 
&\leq {(j+2)b - 2 \choose b-1} \cdot \prod_{k=1}^b 1 = {(j+2)b - 2 \choose b-1}
\end{align*}
\end{proof}






\end{document}